\documentclass[10pt,twocolumn,aps,pra,superscriptaddress,floatfix]{revtex4-2}
\usepackage{amsmath,amssymb,amsfonts,amsthm,booktabs}
\usepackage{graphicx}
\usepackage[colorlinks=true,linkcolor=blue,citecolor=blue,urlcolor=blue]{hyperref}
\newcommand{\Cl}{\mathrm{Cliff}}
\newcommand{\F}{\mathbb F_2}
\newcommand{\Tr}{\operatorname{Tr}}
\newcommand{\Size}{\operatorname{Size}}
\newcommand{\supp}{\operatorname{supp}}
\newcommand{\E}{\mathbb E}

\newtheorem{theorem}{Theorem}
\newtheorem{lemma}[theorem]{Lemma}
\newtheorem{proposition}[theorem]{Proposition}
\newtheorem{corollary}[theorem]{Corollary}

\begin{document}

\title{Exact Clifford Optimality for Two-Sided Locally Randomized Classical Shadows}
\author{Xiaotian Nie}
\affiliation{Intelligent Quantum Inception Co., Ltd., Haidian, Beijing, 100083, China}
\affiliation{iFLYTEK Research, Hefei, 230088, China}
\affiliation{Beijing Key Laboratory of Cold Atom Quantum Computation, Tsinghua University, 100084, Beijing, China}
\author{Tao Zhang}
\email{tao\_zhang@mail.tsinghua.edu.cn}
\affiliation{Institute for Advanced Study, Tsinghua University, Beijing 100084, China}
\affiliation{Intelligent Quantum Inception Co., Ltd., Haidian, Beijing, 100083, China}
\affiliation{Beijing Key Laboratory of Cold Atom Quantum Computation, Tsinghua University, 100084, Beijing, China}
\author{Yadong Wu}
\email{yadongwu@scu.edu.cn}
\affiliation{College of Physics, Sichuan University, Chengdu 610065, China}
\author{Linghui Chen}
\email{lhchen@iflytek.com}
\affiliation{iFLYTEK Research, Hefei, 230088, China}
\affiliation{Intelligent Quantum Inception Co., Ltd., Haidian, Beijing, 100083, China}
\affiliation{Beijing Key Laboratory of Cold Atom Quantum Computation, Tsinghua University, 100084, Beijing, China}
\date{\today}

\begin{abstract}
Entangling measurements can reduce the cost of learning many-body
correlations. We consider classical shadows in which a fixed unitary
on a $k$-qubit region is sandwiched between two layers of random
single-qubit Clifford gates, and ask which middle unitary best
estimates Pauli strings with full support on the region.
Combining symplectic support counting with a constraint from
operator-size parity, we prove that within this architecture the
minimum squared shadow norm over all Clifford middle unitaries is
$C_k^{\rm opt}=4\cdot 9^k/(3\cdot 5^k+3^k-6)$ for odd $k$ and
$C_k^{\rm opt}=4\cdot 9^k/(3\cdot 5^k-3^k+2)$ for even $k$.
A dual-axis rotation circuit $U_*$ attains this minimum for every $k$.
Asymptotically $C_k^{\rm opt}\simeq(4/3)\cdot(9/5)^k$: the exponential
base of Wu \emph{et al.}'s contractive unitary is optimal, and the
prefactor improves from $2$ to $4/3$.
For $1\leq k\leq7$, exact certificates show that $U_*$ is
optimal over the entire unitary group $\mathrm U(2^k)$, and we
conjecture that this holds for every $k$.
\end{abstract}

\maketitle

\section{Introduction}
\label{sec:introduction}

Estimating many observables from limited quantum data is a basic task
in quantum simulation and information processing. Direct fidelity
estimation~\cite{FlammiaLiu2011,DaSilva2011DirectFidelity} and shadow
tomography~\cite{Aaronson2020ShadowTomography} illustrate how a selected
set of properties can be learned without reconstructing the entire
state. Classical shadows implement reusable property estimation through
randomized measurements and classical
post-processing~\cite{Huang2020}. The same measurement toolbox also
supports entanglement and correlation measurements
\cite{Elben2019Correlations,Brydges2019Renyi,Elben2023Toolbox}, and
shadow estimation has been demonstrated
experimentally~\cite{Struchalin2021Experimental,Zhang2021Experimental}.

For a Pauli string with $k$ nonidentity factors, independent random
single-qubit Pauli measurements give squared shadow norm $3^k$.
Random global Clifford measurements on the same $k$ qubits give
$2^k+1$. Entangling the measurement basis can therefore reduce the
statistical cost of long strings. In locally scrambled shadows, this
cost is set by the size distribution of the string after propagation through
the measurement circuit~\cite{Hu2023,Bu2024}, connecting measurement
design to entanglement features~\cite{YouGu2018,KuoAkhtarArovasYou2020}
and operator spreading~\cite{Nahum2018OperatorSpreading}.
Shallow random circuits, brickwork architectures, tensor-network
methods, and Hamiltonian evolution provide several routes to useful
measurement ensembles
\cite{Ippoliti2023OperatorRelaxation,Bertoni2024Shallow,
Arienzo2023Brickwork,Akhtar2023TensorNetworks,Hu2022Hamiltonian,Akhtar2025}.

\begin{figure}[!t]
\centering
\includegraphics[width=0.77\columnwidth]{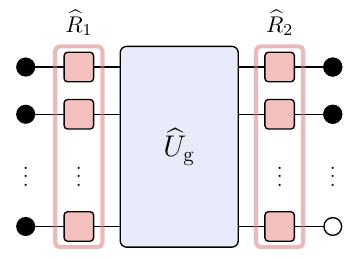}
\caption{Schematic of the classical shadow tomography protocol for a
Pauli string operator acting on a subsystem, following
Refs.~\cite{Hu2023,Wu2026}.
The protocol comprises a middle unitary $\widehat U_{\rm g}$ acting
globally on the subsystem (blue block), sandwiched between two layers $\widehat R_1$ and
$\widehat R_2$ of random single-qubit Clifford rotations
(pink blocks). Filled and open circles denote
nonidentity and identity factors of the propagated Pauli string,
respectively.}
\label{fig:measurement-protocol}
\end{figure}

A useful class of entangling measurement ensembles has the form
$\widehat R_2\widehat U_{\rm g}\widehat R_1$~\cite{Hu2023}
(Fig.~\ref{fig:measurement-protocol}): a fixed middle unitary
$\widehat U_{\rm g}$ on the selected $k$-qubit region is sandwiched
between two independent layers $\widehat R_1$ and $\widehat R_2$ of
uniformly random single-qubit Clifford gates.
The middle unitary is reused across shots, while the local gates are
resampled. The local layers make the measurement channel diagonal in
the Pauli basis, and a good middle unitary is one that shrinks
full-support strings before readout.

Wu \emph{et al.} constructed a contractive unitary~\cite{Wu2026}
that shrinks about half of all full-support strings to roughly
two-thirds of their length. They derived its exact squared shadow norm at
every $k$, which grows asymptotically as $2\cdot(9/5)^k$ and improves
the exponential base of global random Clifford measurements.
The exact optimum, however, was known only for two
qubits~\cite{Wu2026,Akhtar2025}.
This raises a central question: within this architecture, what is the
minimum squared shadow norm over all Clifford middle unitaries at each
block size $k$, and which circuit attains it?

We answer this question for every $k\geq1$. The minimum squared shadow
norm is
\begin{equation}
 C_k^{\rm opt}=
 \begin{cases}
 \dfrac{4\cdot 9^k}{3\cdot 5^k+3^k-6},&k\ \text{odd},\\[4pt]
 \dfrac{4\cdot 9^k}{3\cdot 5^k-3^k+2},&k\ \text{even}.
 \end{cases}
 \label{eq:intro-optimum}
\end{equation}
Optimality follows from counting how many full-support strings a
Clifford middle unitary can map into each output support. This
counting alone gives the base $9/5$; operator-size parity, which
links different output supports, sharpens it to the exact value at
every $k$.
The bound is attained by the dual-axis rotation circuit
$U_*=e^{i\pi X^{\otimes k}/4}e^{i\pi Z^{\otimes k}/4}$, composed of
two collective Pauli rotations. Compared with the contractive unitary,
it leaves only about a quarter of full-support strings at full size,
instead of half.
Asymptotically $C_k^{\rm opt}\simeq(4/3)\cdot(9/5)^k$, so both the
base and the prefactor are optimal, and the prefactor improves from
$2$ to $4/3$.
For $1\leq k\leq7$, exact certificates extend this
optimality to arbitrary middle unitaries, suggesting that the same
optimum holds over the full unitary group for every $k$.

The remainder of this paper is organized as follows.
Section~\ref{sec:protocol} defines the measurement protocol and states
the main results. Section~\ref{sec:bound} proves the Clifford bound
using binary symplectic linear algebra, quadratic forms, and
paired-coordinate selection related to linear matroid
parity~\cite{Lovasz1979Determinants,MatoyaOki2022Pfaffian,
IshikawaWakayama1995}, and extends it to arbitrary unitaries up to
seven qubits.
Section~\ref{sec:construction} gives the saturating construction
and compares it with the contractive unitary.
Section~\ref{sec:discussion} discusses
implications and open questions.
In the appendices, we provide supporting proofs and exact rational
certificates, and discuss reconstruction, implementation,
and application details.

\section{Measurement protocol and main results}
\label{sec:protocol}

\subsection{Physical setting and notation}

Let $\rho$ be a state of $N$ qubits, and let the subsystem $S$ be a
selected region of $k$ sites, adjacent or not, which we also call a
block of size $k$.
The main task is to estimate, from the same measurement
data, the expectation values of the family of Pauli observables that
are nonidentity at every site of $S$ and identity outside $S$.
We write $\widehat O$ for an arbitrary member of this family.
In single-block formulas, $\rho$ denotes the reduced state $\rho_S$
and $\widehat O$ the restricted Pauli $\widehat O_S$.

Following the two-sided locally randomized setting of
Refs.~\cite{Hu2023,Wu2026}, on each shot we sample the measurement circuit
\begin{equation}
 \widehat U=\widehat R_2\widehat U_{\rm g}\widehat R_1,\qquad
 \widehat R_j=\bigotimes_{i\in S}u_{j,i},
 \label{eq:ensemble}
\end{equation}
where all $u_{j,i}$ are independent and uniform single-qubit
Clifford gates. The middle unitary $\widehat U_{\rm g}$ is a fixed
Clifford acting globally on $S$.
After applying $\widehat U$, we measure in the computational basis,
obtaining $\mathbf z\in\{0,1\}^k$.
The raw snapshot and measurement channel are
\begin{align}
 \widehat\sigma_U(\mathbf z)
 &=\widehat U^\dagger|\mathbf z\rangle\langle\mathbf z|\widehat U,
 \label{eq:raw-snapshot}\\
 \mathcal M_{\mathcal E_U}(A)
 &=\E_{\widehat U\sim\mathcal E_U}\sum_{\mathbf z}
    \Tr[A\widehat\sigma_U(\mathbf z)]\,
             \widehat\sigma_U(\mathbf z).
 \label{eq:channel}
\end{align}
The canonical shadow is
$\widehat\rho=\mathcal M_{\mathcal E_U}^{-1}
(\widehat\sigma_U(\mathbf z))$, and the scalar single-shot estimate is
$o=\Tr(\widehat O\widehat\rho)$.

For a Pauli $P$, we write $\Size(P)=|\supp(P)|$ for the number of
nonidentity tensor factors.
Conjugating a uniformly selected full-support Pauli by
$\widehat U_{\rm g}$ gives an output-size distribution $\pi(m)$.
The output layer turns each nonidentity factor into $Z$ with
probability $1/3$, and the readout keeps only strings of $I$ and $Z$,
so a string of size $m$ survives with probability $3^{-m}$.
The two local layers make the channel diagonal in the Pauli basis,
with a common full-support eigenvalue
\begin{equation}
 \mathcal M_{\mathcal E_U}(\widehat O)=w_k\widehat O,\qquad
 w_k(\widehat U_{\rm g})=\sum_{m=0}^{k}\pi(m)3^{-m}.
 \label{eq:pauli-weight}
\end{equation}
Following Refs.~\cite{Wu2026,Hu2025RobustShallow}, we call the channel
eigenvalue $w_k$ the \emph{Pauli weight}.
This is distinct from the support size, which counts nonidentity
factors; when referring to support we use \emph{size} throughout.

\begin{proposition}[Statistical objective]
\label{prop:moment}
For every input state, the canonical estimator is unbiased.
Its state-independent second moment equals the squared shadow norm,
\begin{equation}
 C_k(\widehat U_{\rm g})
 :=\|\widehat O\|_{\mathcal E_U}^{\,2}
 =\E_\rho o^{\,2}
 =\frac1{w_k(\widehat U_{\rm g})}.
 \label{eq:norm}
\end{equation}
\end{proposition}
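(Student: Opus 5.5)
We first verify the Pauli-diagonal form \eqref{eq:pauli-weight}, then obtain unbiasedness and the second moment from it. Write a Pauli $P$ on $S$ and average over $\widehat R_1$ and $\widehat R_2$.

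\emph{Eigenvalue.} By the definition \eqref{eq:channel}, $\mathcal M(P)=\E_U\sum_{\mathbf z}\langle\mathbf z|UPU^\dagger|\mathbf z\rangle\,U^\dagger|\mathbf z\rangle\langle\mathbf z|U$. Expand $|\mathbf z\rangle\langle\mathbf z|=2^{-k}\sum_{Q\in\{I,Z\}^{k}}(-1)^{\cdots}Q$ and sum over $\mathbf z$. This gives $\mathcal M(P)=2^{-k}\E_U\sum_{Q\in\{I,Z\}^k}\Tr(UPU^\dagger Q)\,U^\dagger QU$, so the channel is the Pauli twirl of the diagonal-string projector.

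Twirling by $\widehat R_1$ (independent single-qubit Cliffords) makes the channel diagonal in the Pauli basis. The eigenvalue depends only on $\supp(P)$. Averaging over $\widehat R_1$, a full-support $P$ becomes uniform over full-support Paulis. $\widehat U_{\rm g}$ then maps it to a Pauli of size $m$ with probability $\pi(m)$. Averaging over $\widehat R_2$, the image is uniform over the Paulis with that support, and it lies in $\{I,Z\}^k$ with probability $3^{-m}$. Hence the eigenvalue is $w_k=\sum_m\pi(m)3^{-m}>0$.

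\emph{Unbiasedness.} $\E[o]=\Tr\bigl(\widehat O\,\mathcal M^{-1}(\E\widehat\sigma)\bigr)$. We need $\E\widehat\sigma=\mathcal M(\rho)$, which is exactly the definition. Using self-adjointness of $\mathcal M$, $\E[o]=\Tr(\mathcal M^{-1}(\widehat O)\mathcal M(\rho))=\Tr(\widehat O\rho)$. Invertibility on the relevant operator only needs $w_k\neq0$, so we apply $\mathcal M^{-1}$ via $\widehat O\mapsto \widehat O/w_k$.

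\emph{Second moment.} $o=w_k^{-1}\Tr(\widehat O\widehat\sigma_U(\mathbf z))=w_k^{-1}\langle\mathbf z|U\widehat OU^\dagger|\mathbf z\rangle$. Because $U\widehat OU^\dagger$ is, up to sign, a Pauli string (Clifford ensemble), this value is $\pm1$ if the string is in $\{I,Z\}^k$ and $0$ otherwise. Thus $o^2=w_k^{-2}\mathbf 1[U\widehat OU^\dagger\in\pm\{I,Z\}^k]$, independent of $\mathbf z$ and hence of $\rho$.

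Its expectation over $U$ is $w_k^{-2}\Pr[\text{diagonal}]=w_k^{-2}\cdot w_k=1/w_k$. The probability equals $w_k$ by the same size-distribution argument as above. For the shadow norm, we maximize the state-independent variance formula $\max_\rho\E\sum_{\mathbf z}\langle\mathbf z|U\rho U^\dagger|\mathbf z\rangle\,o^2$. This coincides with the value just computed, since $o^2$ does not depend on $\mathbf z$.

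The main care point is the $\widehat R_1$ twirl argument: we must show that full-support inputs are uniformly randomized so that $\pi(m)$ is well defined and the channel is exactly diagonal. This follows because single-qubit Cliffords act transitively on $\{X,Y,Z\}$ and form a unitary 2-design.
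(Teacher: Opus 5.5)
Your proof is correct and follows essentially the same route as the paper's argument in Appendix~\ref{app:estimator}: local twirling gives the eigenvalue $w_k=\sum_m\pi(m)3^{-m}$, self-adjointness gives $o=w_k^{-1}\langle\mathbf z|\widehat U\widehat O\widehat U^\dagger|\mathbf z\rangle$, and the squared numerator is a state- and outcome-independent indicator that occurs with probability $w_k$. The only difference is that the paper runs the same support argument on every support to show that all nonidentity Pauli eigenvalues are positive, so the full inverse $\mathcal M^{-1}$ defining the canonical shadow exists, whereas you check only the eigenvalue on $\widehat O$ and use self-adjointness to argue that nothing more is needed for the scalar estimate.
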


The proposition follows directly from Clifford Pauli propagation
(Appendix~\ref{app:estimator}). The squared shadow norm controls the
sampling cost: an average of $L$ independent single-shot estimates
has variance at most $C_k/L$.
For a fixed ensemble, $C_k$ is also the smallest worst-case variance
of any single-shot estimator unbiased for every state
(Corollary~\ref{cor:minimax}), so the optima below also hold when the
reconstruction is optimized.

\subsection{Main results}

For the proof, we rescale the Pauli weight into an integer score.
Let $\mathcal F_k=\{X,Y,Z\}^{\otimes k}$, with phases omitted, and define
\begin{equation}
 S_k(\widehat U_{\rm g})
 :=9^k w_k(\widehat U_{\rm g})
 =\sum_{P\in\mathcal F_k}
     3^{k-\Size(\widehat U_{\rm g}P\widehat U_{\rm g}^\dagger)}.
 \label{eq:score}
\end{equation}
\begin{theorem}[Exact Clifford optimum]
\label{thm:main}
For every $k\geq1$, set
\begin{equation}
 s_k=
 \begin{cases}
 (3\cdot 5^k+3^k-6)/4,&k\ \text{odd},\\
 (3\cdot 5^k-3^k+2)/4,&k\ \text{even}.
 \end{cases}
 \label{eq:sharp-score}
\end{equation}
In the ensemble \eqref{eq:ensemble},
\begin{equation}
 \max_{\widehat U_{\rm g}\in\Cl(k)}S_k(\widehat U_{\rm g})=s_k,
 \qquad
 \min_{\widehat U_{\rm g}\in\Cl(k)}C_k(\widehat U_{\rm g})
       =\frac{9^k}{s_k}.
 \label{eq:exact-optimum}
\end{equation}
The same optimum holds for mixtures of Clifford middle unitaries
drawn independently of the local layers.
\end{theorem}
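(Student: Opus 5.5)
The plan is to rewrite the score $S_k$ as a weighted count over output supports, bound the count for each support by binary symplectic linear algebra, and then sharpen the total with a parity constraint.

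\textbf{Symplectic reformulation.} Identify a Clifford $\widehat U_{\rm g}$, modulo phases, with a symplectic map $M\in Sp(2k,\F)$ acting on $v=(x_1,z_1,\dots,x_k,z_k)$. A Pauli is full support iff every site pair $(x_i,z_i)$ is nonzero. For each output support $T\subseteq[k]$ define
\[
N_T=\#\{P\in\mathcal F_k:\supp(\widehat U_{\rm g}P\widehat U_{\rm g}^\dagger)=T\}.
\]
Then $S_k=\sum_T 3^{k-|T|}N_T$ and $\sum_T N_T=3^k$.

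\textbf{Support counting.} The Paulis supported inside $T$ form a $2|T|$-dimensional subspace $W_T$. Its preimage $M^{-1}W_T$ is a $2|T|$-dimensional subspace of $\F^{2k}$. Hence
\[
\sum_{T'\subseteq T}N_{T'}\le f_k(|T|),
\]
where $f_k(d)$ is the maximum number of full-support vectors in a $2d$-dimensional subspace. I would first bound $f_k$ by projecting the subspace onto site pairs one at a time: each site with nonzero projection kills at least a $1/4$ fraction of the vectors, and a matroid-parity style selection of paired coordinates controls how many sites can be ``free.'' Combined with the trivial bound $N_T\le 3^{|T|}$, this turns the problem into a linear program: maximize $\sum_T 3^{k-|T|}N_T$ subject to these nested cumulative constraints and $\sum_T N_T=3^k$. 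The LP optimum should come out as $\tfrac34 5^k+O(3^k)$, with $5^k=\sum_m\binom km 3^{k-m}2^m$ signalling the shape of the optimal distribution. This step gives the base $9/5$ and the prefactor $4/3$ up to the $3^k$ correction.

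\textbf{Parity sharpening.} Over $\F$ one has
\[
\Size(P)\equiv\sum_i(x_i+z_i+x_iz_i)=:q(v)\pmod 2,
\]
a quadratic form whose polarization is the symplectic form. Because $M$ preserves that form, $q\circ M$ polarizes to the same form as $q$, so $q\circ M=q+\langle a,\cdot\rangle$ for a fixed vector $a$ depending on $\widehat U_{\rm g}$. For full-support inputs $q(v)\equiv k$, so the output size parity equals $k+\langle a,v\rangle$. The full-support Paulis therefore split into two classes of determined sizes, each of which must land on supports of a prescribed parity. The size of each class is computable from $a$ by a character sum: $\sum_{v\ \text{full}}(-1)^{\langle a,v\rangle}=\prod_i(\pm1\text{ or }3)$, giving counts of the form $(3^k\pm(-1)^{\cdot}3^{j})/2$. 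Adding this as an extra constraint to the LP, then optimizing over $a$, should produce the exact correction terms $+3^k-6$ (odd $k$) and $-3^k+2$ (even $k$).

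\textbf{Attainment and mixtures.} For the matching lower bound I would compute $S_k(U_*)$ for $U_*=e^{i\pi X^{\otimes k}/4}e^{i\pi Z^{\otimes k}/4}$ directly. Each rotation either fixes $P$ or multiplies it by $X^{\otimes k}$ or $Z^{\otimes k}$, depending on anticommutation. This gives explicit output sizes, and summing them should reproduce $s_k$; I take this from Section~\ref{sec:construction}. For mixtures, the Pauli weight $w_k$ is affine in the distribution over middle unitaries, since the local layers are independent. Hence $S_k$ of a mixture is an average of Clifford scores, which is at most $s_k$. The result $C_k=9^k/S_k$ then follows from Proposition~\ref{prop:moment}.

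\textbf{Expected main obstacle.} The main obstacle will be proving the sharp bound on $f_k(d)$ and showing the LP with the parity constraint is tight at exactly $s_k$, rather than merely asymptotically. I expect the extremal subspaces must be characterized exactly, which is where the paired-coordinate selection argument does the real work.
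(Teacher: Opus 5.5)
Your plan has a genuine gap. The relaxation you set up cannot reach $s_k$, and your claim that support counting alone yields the prefactor $4/3$ is incorrect. With the caps $\sum_{T'\subseteq T}N_{T'}\le 3^{|T|}$ (and $0$ for $T=\varnothing$), the profile $N_T=2^{|T|}-(-1)^{|T|}$ for $T\neq\varnothing$ saturates every cap, sums to $3^k$, and has value exactly $5^k-2^k$. This is the paper's ``counting alone'' bound: it fixes the base $9/5$, but its leading coefficient is $1$, not $3/4$.

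Adding a single global two-class parity count does not repair this. For $\Size(a)$ near $k$, the prescribed class sizes $(3^k\pm 3^{k-\Size(a)})/2$ differ from those of the profile above by $O(2^k)$. Shifting that much mass between adjacent support sizes keeps all caps satisfied and costs only $O(2^k)$, so the bound stays at $5^k-O(2^k)$.

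Sharpening the caps does not rescue it either. At $k=4$, the exact maxima of full-support counts over nondegenerate subspaces (which is what $M^{-1}V_T$ always is) are $2,9,24$ for $|T|=1,2,3$. Take the parity split $41/40$ that $U_*$ itself realizes. Then the profile $N_{\{i\}}=2$, $N_{\{1,2\}}=N_{\{3,4\}}=4$, $N_{\{i,j\}}=3$ for the other four pairs, $N_{\{i,j,l\}}=8$, $N_{[4]}=21$ meets every constraint. Its value is $513>s_4=449$. So no choice of $f_k$ combined with a global parity count can give $s_k$. Separately, you also need $\Size(a)$ even, which follows from equality of Gauss sums. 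Without it, already at $k=2$ your LP returns $21>17$.

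The missing idea is to impose parity inside every support sector, after an affine shift. With $H=M^{-1}$, the paper's $\delta_H$ corresponds to your $a$ once $\langle a,\cdot\rangle$ is identified with $B(\delta_H,\cdot)$. It satisfies $Q_0(Hp+\delta_H)=Q_0(p)$, so every full-support input lands at parity exactly $k$. A Fourier identity, $\sum_{p\in\mathcal F_k}g_k(Hp+\delta_H)=S_k(H^{-1})$, shows that this shifted score is still a Clifford score. For labels of parity $k$, one has $g_k(y)=1+2\sum_{T\supseteq\supp(y),\ k-|T|\ \mathrm{odd}}2^{k-|T|}$, so only the counts $N_T$ with $k-|T|$ odd enter the score.

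On each sector, the size-parity quadratic restricted to $H^{-1}(\delta_H+V_T)$ has Gauss sum $(-2)^{|T|}$. Hence, when $k-|T|$ is odd, every full-support point carries the minority sign. The opposite-sign lemma then caps that count at $\lfloor(3^{|T|+1}-3)/4\rfloor\approx\tfrac34\cdot 3^{|T|}$. Its proof reparametrizes by a paired information set, which exists because the direction is nondegenerate; this is also why your $f_k$ over arbitrary subspaces is the wrong object. It then uses a product-measure bound on the Fourier side. This per-sector factor $3/4$ is the true source of the sharp prefactor, and the two binomial sums give $s_k$ exactly. Your attainment and mixture steps are fine.
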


\begin{corollary}[Optimal exponential base and prefactor]
\label{cor:asymptotic}
Let $C_k^{\rm opt}$ be the minimum in Eq.~\eqref{eq:exact-optimum}.
Then
\begin{equation}
 \lim_{k\to\infty}(C_k^{\rm opt})^{1/k}=\frac95,\qquad
 \lim_{k\to\infty}\frac{C_k^{\rm opt}}{(9/5)^k}=\frac43 .
 \label{eq:asymptotic}
\end{equation}
\end{corollary}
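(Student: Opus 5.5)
The statement to prove is Corollary~\ref{cor:asymptotic}. It follows directly from the closed form in Theorem~\ref{thm:main}, which we take as given, so the plan is simply an asymptotic expansion of $C_k^{\rm opt}=9^k/s_k$.

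Write $s_k=\tfrac34 5^k\bigl(1+\epsilon_k\bigr)$, where
\[
\epsilon_k=\pm\tfrac13(3/5)^k+O(5^{-k}).
\]
The sign is $+$ for odd $k$ and $-$ for even $k$. The additive constants $-6/4$ and $+2/4$ contribute the $O(5^{-k})$ part. Since $3/5<1$, we have $\epsilon_k\to0$ along both parities, uniformly in the parity. Then
\[
\frac{C_k^{\rm opt}}{(9/5)^k}=\frac{9^k}{(9/5)^k\,s_k}=\frac{5^k}{s_k}=\frac{4}{3(1+\epsilon_k)}\to\frac43 .
\]
This gives the second limit.

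For the first limit, take $k$-th roots:
\[
(C_k^{\rm opt})^{1/k}=\frac95\left(\frac{4}{3(1+\epsilon_k)}\right)^{1/k}.
\]
The bracketed quantity lies in a compact subinterval of $(0,\infty)$ for all large $k$, so its $k$-th root tends to $1$. Hence $(C_k^{\rm opt})^{1/k}\to9/5$.

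There is no real obstacle. The only point needing care is that the odd and even subsequences use different formulas for $s_k$, so both must give the same limits. This is handled by the uniform bound $|\epsilon_k|\le\tfrac13(3/5)^k+2\cdot5^{-k}$, which tends to zero. We should also confirm that $s_k>0$ for all $k\ge1$, so that no division by zero occurs. For odd $k$ this is clear since $3\cdot5^k>6$. For even $k$ it holds because $3\cdot5^k>3^k$.
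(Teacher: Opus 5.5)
Your proposal is correct and matches the paper, which gives no separate proof and simply reads the corollary off the closed form $C_k^{\rm opt}=9^k/s_k$ of Theorem~\ref{thm:main} by factoring out $\tfrac34\, 5^k$. Your parity-uniform bound on $\epsilon_k$ and the check that $s_k>0$ cover all that is needed.
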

For odd $k\geq3$ the optimal normalized norm approaches $4/3$
from below; for even $k$ it approaches from above.

The same quantities are defined for any middle unitary
$\widehat U_{\rm g}\in\mathrm U(2^k)$: the local layers still make the
channel Pauli diagonal, and $\pi(m)$ becomes a squared-transfer
distribution (a general unitary maps each input Pauli to a
superposition; $\pi(m)$ organizes the squared expansion coefficients by
output size). Appendix~\ref{app:unitary-certificates} gives the
formulas and extends Proposition~\ref{prop:moment} to this setting.
For blocks of up to seven qubits, the Clifford optimum is also the
optimum over all middle unitaries.

\begin{theorem}[Finite-size unitary optimum]
\label{thm:finite-unitary}
At each block size $k=1,\ldots,7$,
\begin{equation}
 \max_{\widehat U_{\rm g}\in\mathrm U(2^k)}
       S_k(\widehat U_{\rm g})=s_k,\qquad
 \min_{\widehat U_{\rm g}\in\mathrm U(2^k)}
       C_k(\widehat U_{\rm g})=\frac{9^k}{s_k}.
 \label{eq:finite-unitary}
\end{equation}
The same optimum holds for mixtures of middle unitaries drawn
independently of the local layers.
\end{theorem}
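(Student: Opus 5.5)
The lower bound is inherited from the Clifford case. By Theorem~\ref{thm:main}, some Clifford middle unitary (the circuit $U_*$ of Sec.~\ref{sec:construction}) attains $S_k=s_k$, and $\Cl(k)\subset\mathrm U(2^k)$, so $\max_{\mathrm U(2^k)}S_k\ge s_k$. Mixtures also need no separate treatment. When the middle unitary is drawn independently of the local layers, the channel eigenvalue $w_k$ is the average of the individual $w_k(\widehat U_{\rm g})$. So once the upper bound holds pointwise, it holds for every mixture, and the statement for $C_k=1/w_k$ follows. The real content is therefore the upper bound $S_k(\widehat U_{\rm g})\le s_k$ for every unitary, at each $k\le7$.

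First I would rewrite $S_k$ in the general-unitary form of Appendix~\ref{app:unitary-certificates}. If $c_{QP}=2^{-k}\Tr(Q\widehat U_{\rm g}P\widehat U_{\rm g}^\dagger)$ denotes the Pauli transfer matrix (a real orthogonal $4^k\times4^k$ matrix), then $S_k=\sum_{P\in\mathcal F_k}\sum_Q c_{QP}^2\,3^{k-\Size(Q)}$. Equivalently, with $V=\widehat U_{\rm g}\otimes\overline{\widehat U_{\rm g}}$,
\begin{equation*}
S_k=4^{-k}\,3^{k}\,\Tr\!\left[V A V^\dagger B\right],\qquad
A=\bigotimes_{i=1}^k(XX^*+YY^*+ZZ^*),\quad
B=\bigotimes_{i=1}^k\left(II^*+\tfrac13(XX^*+YY^*+ZZ^*)\right),
\end{equation*}
where each factor acts on the two copies of site $i$, so both $A$ and $B$ are site-product operators.

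Maximizing over an arbitrary unitary $V$ (eigenvalue alignment), or relaxing $(c_{QP}^2)$ to a doubly stochastic matrix, is far too loose. Either relaxation would send every full-support string to size one, and so ignores the algebraic constraints that the paper already found essential in the Clifford case: symplectic support counting and size parity. I would therefore use a semidefinite relaxation that keeps the structure of $V=U\otimes\bar U$. Concretely, I would treat $S_k$ as a linear functional of the degree-$(2,2)$ moment operator of $U$ (equivalently, the Choi state of the twirled channel). I would then impose positivity, the partial-trace/unitality constraints, and the Pauli-product relations $Q_1Q_2\propto Q_3$ (which encode, in quadratic form, the commutation/parity structure behind the Clifford proof).

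The symmetries reduce the size of this relaxation. These are the local Clifford gates on both sides, permutations of the $k$ sites, and the $X\!\leftrightarrow\!Z$ symmetry of $U_*$. After symmetry reduction, the variables are indexed by size profiles rather than individual strings, so the SDP has only polynomially many blocks in $k$.

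The certificate for each $k=1,\dots,7$ is an exact dual feasible point with objective value $s_k$. I would obtain it by solving the reduced SDP numerically, guessing the rational structure using complementary slackness against the primal point given by $U_*$ (which pins down the kernel of the dual slack matrix), and then rounding to rationals. Finally I would verify positive semidefiniteness of the slack exactly, for instance by an $LDL^{\top}$ factorization over $\mathbb Q$. Weak duality then gives $S_k\le s_k$.

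The main obstacle is whether the low-level relaxation is tight. If the first-level moment relaxation gives a value strictly above $s_k$ for some $k$, I would first add the size-parity constraints explicitly as linear equalities on the symmetry-reduced variables. Failing that, I would move one level up the hierarchy, accepting the larger block sizes, which should remain manageable up to $k=7$. The exact rational rounding at $k=6,7$ is the other practical bottleneck, since numerical dual solutions must be projected onto the face fixed by $U_*$ before rounding.
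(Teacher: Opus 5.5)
This is essentially the paper's route. The paper uses the two-copy Choi relaxation, which after the local-Clifford and site-permutation twirl collapses to a small linear program in the swap-test statistics $W_{rs}$: nonnegativity, $W_{rs}=0$ for $r+s$ odd, binomial marginals, and $A_{ab}\ge0$, with exact rational duals for $k\le6$. Because that program has value $s_7+432/31$ at $k=7$, the paper then takes exactly your fallback: a three-copy moment SDP with partial-transpose positivity, certified in ball arithmetic.

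One correction to your contingency step: operator-size parity is not a valid linear constraint for arbitrary middle unitaries, and is not even conserved by Cliffords without the affine shift. The parity ingredient that does survive is the global swap parity of two copies of the pure Choi state, which is precisely the condition that $r+s$ be even.
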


Section~\ref{sec:bound} proves the bound $S_k\leq s_k$ in both
theorems, and Section~\ref{sec:construction} constructs a Clifford
circuit that attains it.

\section{Proof of the optimality bounds}
\label{sec:bound}

The proof has two steps. Counting how many full-support strings a
Clifford can map inside each output support gives
$S_k\leq5^k-2^k$, which already fixes the exponential base.
Operator-size parity then links the output supports: after an affine
shift, every full-support input lands on an output of parity
$k\bmod2$, and a bound on quadratic signs lowers the leading
coefficient from $1$ to the sharp $3/4$.
Section~\ref{sec:finite-unitary} replaces the counting by convex
relaxations to bound arbitrary unitaries.

\subsection{Symplectic labels and paired information sets}

Use binary Pauli labels $p\in V_k=\F^{2k}$~\cite{DehaeneDeMoor2003}, written as
$p=(x_1,z_1,\ldots,x_k,z_k)$.
The local labels $(1,0)$, $(0,1)$, and $(1,1)$ denote $X$, $Z$,
and $Y$, respectively. For $r=(x_1',z_1',\ldots,x_k',z_k')$, the
symplectic pairing
\begin{equation}
 B(p,r)=\sum_{i=1}^k(x_i z_i'+z_i x_i')
 \label{eq:symplectic}
\end{equation}
encodes commutation. Clifford conjugation acts through a symplectic
linear map $M$, so that Eq.~\eqref{eq:score} becomes
\begin{equation}
 S_k(M)=\sum_{p\in\mathcal F_k}g_k(Mp),
 \qquad g_k(y)=3^{k-\Size(y)}.
 \label{eq:binary-score}
\end{equation}
We use the same notation $\mathcal F_k$ for full-support binary labels.

For $T\subseteq[k]$, let $V_T$ consist of labels supported inside $T$.

\begin{lemma}[Paired information set]
\label{lem:paired}
Every nondegenerate $2\ell$-dimensional symplectic subspace
$L\subseteq V_k$ has a set of $\ell$ physical coordinate pairs
such that projection onto those pairs is bijective.
The same projection is an affine bijection on every translate of $L$.
In particular, a translate contains at most $3^\ell$ full-support labels.
\end{lemma}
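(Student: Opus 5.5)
The plan is to prove the first claim with a Pfaffian minor-summation (Cauchy--Binet) identity for the symplectic Gram matrix of $L$, in the spirit of Refs.~\cite{IshikawaWakayama1995,Lovasz1979Determinants}. The affine statement and the $3^\ell$ count then follow by linear algebra.

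\emph{Setup.} Fix a basis of $L$ and let $A$ be the $2k\times2\ell$ matrix over $\F$ whose columns are the basis vectors. Let $J=\bigoplus_{i=1}^k\begin{pmatrix}0&1\\1&0\end{pmatrix}$ be the Gram matrix of $B$ in the ordered coordinates $(x_1,z_1,\ldots,x_k,z_k)$. The restricted form has Gram matrix $G=A^{\mathsf T}JA$. It is alternating, since $B(p,p)=0$ for every $p$.

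\emph{Nondegeneracy as a Pfaffian condition.} Nondegeneracy of $L$ means $\det G\neq0$. Since $\det G=\mathrm{Pf}(G)^2$ for an alternating matrix, this is equivalent to $\mathrm{Pf}(G)=1$ in $\F$. (If one prefers to avoid Pfaffians in characteristic two, lift $A$ to an integer matrix, use the integer alternating form with blocks $\begin{pmatrix}0&1\\-1&0\end{pmatrix}$, and reduce modulo $2$ at the end.)

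\emph{Minor summation.} The Ishikawa--Wakayama identity gives
\begin{equation*}
\mathrm{Pf}(A^{\mathsf T}JA)=\sum_{|I|=2\ell}\mathrm{Pf}(J_{I,I})\,\det(A_{I,\cdot}),
\end{equation*}
where the sum runs over $2\ell$-subsets $I$ of the $2k$ rows. Because $J$ is block diagonal with $2\times2$ blocks, $\mathrm{Pf}(J_{I,I})$ vanishes unless $I$ is a union of $\ell$ complete coordinate pairs $\{x_i,z_i\}$ with $i\in T$, $|T|=\ell$. For such $I$ we have $\mathrm{Pf}(J_{I,I})=\pm1$. Hence
\begin{equation*}
\mathrm{Pf}(G)\equiv\sum_{|T|=\ell}\det(A_T)\pmod 2,
\end{equation*}
where $A_T$ is the $2\ell\times2\ell$ submatrix of rows in the pairs indexed by $T$. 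Since the left side is nonzero, some $T$ has $\det A_T\neq0$. This says exactly that projection of $L$ onto the pairs of $T$ is a bijection $L\to V_T\cong\F^{2\ell}$.

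I expect the main obstacle to be this step: justifying the minor-summation identity and its sign and characteristic-two bookkeeping. The integer lift should settle it, since signs are irrelevant modulo $2$.

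A direct induction (choose a qubit $i$ where $L$ projects onto all of $\F^2$, then recurse on the kernel) looks tempting but fails. The kernel can be degenerate, and degenerate subspaces need not admit a paired information set; for example, $\mathrm{span}(X_1,X_2)$ has no single pair onto which it projects bijectively. This is why I would use the global Pfaffian argument instead.

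\emph{Translates and the count.} The projection $\pi_T$ is linear, so on a translate $c+L$ it acts as $c+p\mapsto\pi_T(c)+\pi_T(p)$. This is an affine bijection onto $\pi_T(c)+V_T=V_T$. A full-support label in $c+L$ restricts to a label that is nonidentity on every pair of $T$, and there are $3^\ell$ such restrictions. Injectivity of $\pi_T$ on $c+L$ therefore bounds the number of full-support labels in the translate by $3^\ell$.
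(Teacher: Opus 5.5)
Your proof is correct, and its skeleton matches the paper's. Both arguments expand the nondegeneracy condition for the symplectic Gram matrix over $2\ell$-subsets of coordinates and observe that only unions of complete physical pairs contribute. Both conclude that $\sum_{|T|=\ell}\det(A_T)=1$ in $\F$, and then treat translates and the $3^\ell$ count exactly as you do.

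The difference is the expansion identity. The paper applies ordinary Cauchy--Binet to $\det(GJG^{\mathsf T})$ and writes each term as $\det(G_I)\det(G_{\sigma(I)})$, where $\sigma$ is the involution swapping the two columns of every pair. It cancels each term with $I\neq\sigma(I)$ against its partner and uses $\det(G_I)^2=\det(G_I)$ over $\F$. This stays entirely in characteristic two and needs neither Pfaffians nor an integer lift.

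Your Pfaffian minor-summation formula removes the unpaired subsets directly through $\mathrm{Pf}(J_{I,I})=0$, so you never need the pairwise cancellation. The cost is a less elementary identity, plus the lift to $\mathbb Z$ to make the characteristic-two Pfaffian bookkeeping rigorous. You handle both correctly.
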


Appendix~\ref{app:paired} proves the lemma by expanding a symplectic
Gram determinant. It is also a paired-coordinate information-set
statement for additive complementary-dual
codes over $\mathbb F_4$ with respect to the trace-Hermitian
dual~\cite{ShiLiuKimSole2023ACD}.

The first step expands $g_k$ over supports,
\begin{equation}
 g_k(y)=\sum_{\substack{T\subseteq[k]\\\supp(y)\subseteq T}}
                              2^{k-|T|}.
 \label{eq:support-expansion}
\end{equation}
The preimage $M^{-1}V_T$ is nondegenerate, so Lemma~\ref{lem:paired}
bounds its full-support intersection by $3^{|T|}$.
For $T=\varnothing$ the intersection is empty, and therefore
$S_k(M)\leq5^k-2^k$.
Size parity supplies the rest.

\subsection{Quadratic signs and affine duality}

The size parity is a quadratic function,
\begin{equation}
 Q_0(p)=\Size(p)\bmod2
       =\sum_{i=1}^k(x_i+z_i+x_i z_i),
 \label{eq:parity-quadratic}
\end{equation}
with polar form $B$ and Gauss sum
$\sum_p(-1)^{Q_0(p)}=(-2)^k$.

\begin{lemma}[Opposite-sign count]
\label{lem:quadratic}
Let $Q$ be a quadratic function on $\F^{2\ell}$, possibly with a
constant term, with an arbitrary nondegenerate polar form and Gauss
sum $\epsilon\,2^\ell$, where $\epsilon\in\{+1,-1\}$.
Then
\begin{equation}
 \#\{x\in\mathcal F_\ell:(-1)^{Q(x)}=-\epsilon\}
 \leq\left\lfloor\frac{3^{\ell+1}-3}{4}\right\rfloor .
 \label{eq:opposite-count}
\end{equation}
\end{lemma}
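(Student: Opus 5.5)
The plan is to convert the count into a lower bound on a signed character sum, and to bound that sum by an affine-duality identity combined with induction on $\ell$. Write $\mathcal F_\ell=\{x:\ x_i\neq 0\text{ for every pair } i\}$ and $N_\pm=\#\{x\in\mathcal F_\ell:(-1)^{Q(x)}=\pm\epsilon\}$, so that $N_++N_-=3^\ell$. The bound \eqref{eq:opposite-count} is equivalent to
\begin{equation*}
 \Sigma(Q):=\epsilon\sum_{x\in\mathcal F_\ell}(-1)^{Q(x)}\;\geq\;-\frac{3^\ell-3}{2}.
\end{equation*}
Since $\Sigma\equiv 3^\ell\pmod 2$, parity of the counts then produces the floor.

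\emph{Step 1 (duality identity).} Let $B$ be the nondegenerate polar form of $Q$. Completing the square gives
\begin{equation*}
 \sum_x(-1)^{Q(x)+B(a,x)}=\epsilon\,2^\ell(-1)^{Q(a)+Q(0)}
\end{equation*}
for every $a$, so Fourier transforms of $(-1)^Q$ are again signed values of $Q$. Next, expand the indicator of $\mathcal F_\ell$ pairwise. Using the local symplectic form $b_i$ on each pair,
\begin{equation*}
 \mathbf 1[x_i\neq0]=\tfrac14\sum_{a_i}c(a_i)(-1)^{b_i(a_i,x_i)},\qquad c(0)=3,\quad c(a_i)=-1\ (a_i\neq0).
\end{equation*}
Because $B$ is nondegenerate, each product $\prod_i(-1)^{b_i(a_i,x_i)}$ equals $(-1)^{B(\tilde a,x)}$ for a unique $\tilde a$. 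Substituting gives
\begin{equation*}
 \Sigma(Q)=2^{-\ell}\sum_{T\subseteq[\ell]}(-3)^{\ell-|T|}\,(-1)^{|T|}\!\!\sum_{\supp a=T}\!(-1)^{\tilde Q(a)}
\end{equation*}
for an explicit dual quadratic $\tilde Q$. The inner sums are signed full-support sums on the coordinate subspaces $V_T$.

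\emph{Step 2 (induction).} Bounding each inner sum by the trivial $3^{|T|}$ loses a factor $2^\ell$, so instead I will run an induction on $\ell$ with a strengthened two-sided hypothesis. The hypothesis is an upper bound $3^\ell$ and a lower bound $-(3^\ell-3)/2$ for $\Sigma$, together with the corresponding statements for quadratics whose polar form is degenerate. For degenerate forms, the Gauss sum either vanishes or has magnitude $2^{\ell+r/2}$, where $r$ is the dimension of the radical. The inductive step conditions on one pair $x_\ell=y\in\{(1,0),(0,1),(1,1)\}$. Each of the three slices is
\begin{equation*}
 Q'(x')+B(x',\beta_y)+\gamma_y,
\end{equation*}
an affine shift of a single quadratic $Q'$ on the remaining $\ell-1$ pairs. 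Since $\beta_{(1,1)}=\beta_{(1,0)}+\beta_{(0,1)}$, the three signs are correlated: in each Gauss-sum class, at most two of the three slices can have the ``bad'' sign. The recursion
\begin{equation*}
 f(\ell)\geq 2f(\ell-1)\cdot(\text{bad})+1\cdot(\text{good})
\end{equation*}
then closes to $-(3^\ell-3)/2$, with base case $\ell=1$: three nonzero vectors and at most one opposite sign.

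\emph{Main obstacle.} The hard step is making the slice recursion valid when $Q'$ has a degenerate polar form. In that case the remaining block is not of the type covered by the hypothesis. My first attempt will be to choose the eliminated pair so that $B$ restricted to the complementary $\ell-1$ pairs stays nondegenerate, arguing via a Pfaffian expansion in the spirit of Lemma~\ref{lem:paired}. If such a pair need not exist, I will fall back on Step~1: there the inner sums on each $V_T$ are controlled by Lemma~\ref{lem:paired} and the radical decomposition, and the alternating weights $(-3)^{\ell-|T|}$ must be shown to cancel enough to leave the constant $3/2$.
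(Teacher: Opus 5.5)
Your Step~1 is essentially the paper's Parseval step, up to a sign slip: the pairwise coefficients give weight $3^{\ell-|T|}(-1)^{|T|}$, not $(-3)^{\ell-|T|}(-1)^{|T|}$. Step~2, however, has a genuine gap.

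\emph{The slice recursion cannot close from per-slice bounds.} When the restriction $Q'$ to $x_\ell=0$ is nondegenerate, completing the square shows that at most \emph{one} of the three slices has Gauss-sum sign opposite to $\epsilon$, which is stronger than your ``at most two''. Even so, your hypothesis lets you bound the bad slice's contribution only by $-3^{\ell-1}$. That upper bound is tight in isolation, since $Q_0$ attains $\Sigma=3^{\ell-1}$. Bounding the two good slices by $-(3^{\ell-1}-3)/2$ then gives only $\Sigma\geq 3-2\cdot3^{\ell-1}$. This is below $(3-3^\ell)/2$ for every $\ell\geq3$; at $\ell=3$ it gives $-15$ against the target $-12$. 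The lemma is sharp, so what is missing is a \emph{joint} constraint on the three affinely related slices, and your two-sided hypothesis does not encode one.

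\emph{The degenerate case is not a corner case.} Take $Q=x_1x_2+z_1z_2$ on two pairs. Its polar form is nondegenerate but vanishes identically on each single pair, so no pair can be eliminated while keeping a nondegenerate complement. Every slice is then a nonzero linear functional with Gauss sum $0$, so ``good/bad'' is undefined. Yet this $Q$ is extremal, with $\Sigma=-3$ and count $6$. Your Pfaffian first attempt therefore fails in general. The fallback also does not help: it keeps exact-support sums with alternating weights, which are as hard as the lemma itself, and Lemma~\ref{lem:paired} bounds them only trivially.

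\emph{The missing idea is already inside your Step~1.} First absorb $(-1)^{|T|}=(-1)^{Q_0(a)}$ into the dual quadratic. Setting $R=\tilde Q+Q_0$, which satisfies $R(0)=0$, all weights become positive:
\[
\Sigma=2^{-\ell}\sum_a 3^{\ell-\Size(a)}(-1)^{R(a)}.
\]
Then expand over \emph{containing} coordinate subspaces rather than exact supports, using $3^{\ell-\Size(a)}=\sum_{T\supseteq\supp a}2^{\ell-|T|}$. This gives
\[
\Sigma=\sum_{T\subseteq[\ell]}2^{-|T|}\sum_{a\in V_T}(-1)^{R(a)}.
\]
Each inner sum is a complete Gauss sum of a possibly degenerate quadratic on $V_T$. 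It therefore equals $0$ or $\pm4^{|T|}2^{-r/2}$, where $r$ is the even rank of the restricted polar form, and it can be negative only when $r\geq2$. Hence it is at least $-\tfrac12\,4^{|T|}$, and the $T=\varnothing$ term equals $1$. Summing gives $\Sigma\geq 1-\tfrac12(3^\ell-1)=(3-3^\ell)/2$ directly. This needs no induction and no nondegeneracy of restrictions. It is exactly the paper's argument, phrased there as $\mu^{\otimes\ell}$ being a mixture of uniform measures on the $V_T$.
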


Appendix~\ref{app:quadratic} proves the lemma using a product measure
on physical coordinate pairs and the Fourier transform of a
nondegenerate quadratic phase.

For a symplectic map $H$, the function $Q_0(H^{-1}y)+Q_0(y)$ has zero
polar form, so it is linear and equals $B(\delta_H,y)$ for a unique
vector $\delta_H$:
\begin{equation}
 Q_0(H^{-1}y)+Q_0(y)=B(\delta_H,y).
 \label{eq:defect}
\end{equation}
Since $Q_0\circ H^{-1}$ has the same Gauss sum as $Q_0$, this forces
$Q_0(\delta_H)=0$, and hence
\begin{align}
 Q_0(Hp+\delta_H)&=Q_0(p),\label{eq:parity-forward}\\
 Q_0\bigl(H^{-1}(y+\delta_H)\bigr)&=Q_0(y).
 \label{eq:parity-inverse}
\end{align}
The auxiliary shift $\delta_H$ corrects the parity change induced by
$H$. All full-support inputs therefore have affine outputs with the
same parity $k\bmod2$, linking the output support sectors.
This allows the opposite-sign count to sharpen the separate support
bounds. The affine score is exactly dual to a physical
Clifford score:
\begin{equation}
 \sum_{p\in\mathcal F_k}g_k(Hp+\delta_H)=S_k(H^{-1}).
 \label{eq:affine-duality}
\end{equation}
This identity transfers bounds for the parity-preserving affine map
directly to the Clifford score (Appendix~\ref{app:duality}).

\subsection{Support counts and parity resummation}

Let
\begin{equation}
 N_T=\#\{p\in\mathcal F_k:Hp+\delta_H\in V_T\},
 \qquad \ell=|T|.
 \label{eq:affine-count}
\end{equation}
The affine space $H^{-1}(\delta_H+V_T)$ has nondegenerate direction
of dimension $2\ell$.
By Eq.~\eqref{eq:parity-inverse}, the Gauss sum of $Q_0$ on this
space is $(-2)^\ell$. Reparametrize it by the $\ell$ physical pairs
from Lemma~\ref{lem:paired}. The resulting quadratic function may
carry a constant term and a nonstandard polar form, but it stays
nondegenerate with the same Gauss sum.
Full-support labels project injectively into $\mathcal F_\ell$,
and their sign is $(-1)^k$.
When $k-\ell$ is odd, this sign is opposite to $(-1)^\ell$.
Lemma~\ref{lem:quadratic} therefore gives
\begin{equation}
 N_T\leq
 \begin{cases}
 (3^{\ell+1}-3)/4,&k\ \text{odd},\ \ell\ \text{even},\\
 (3^{\ell+1}-5)/4,&k\ \text{even},\ \ell\ \text{odd}.
 \end{cases}
 \label{eq:count-caps}
\end{equation}
The second line includes the integer rounding in
Eq.~\eqref{eq:opposite-count}.

For $p\in\mathcal F_k$, Eq.~\eqref{eq:parity-forward} says that
$r=k-\Size(Hp+\delta_H)$ is even.
For every label $y$ with this property,
\begin{equation}
 g_k(y)=1+
 2\!\!\sum_{\substack{T\supseteq\supp(y)\\k-|T|\ {\rm odd}}}
                              2^{k-|T|}.
 \label{eq:parity-resum}
\end{equation}
To verify the identity, sum over odd subsets of the $r$ zero
positions: their total weight is
$[(1+2)^r-(1-2)^r]/2=(3^r-1)/2$.
Equations~\eqref{eq:affine-duality} and \eqref{eq:parity-resum} yield
\begin{equation}
 S_k(H^{-1})
 =3^k+2\!\!\sum_{\substack{T\subseteq[k]\\k-|T|\ {\rm odd}}}
                         2^{k-|T|}N_T .
 \label{eq:score-resum}
\end{equation}
Only the support counts bounded in Eq.~\eqref{eq:count-caps} occur.

Substituting Eq.~\eqref{eq:count-caps} into
Eq.~\eqref{eq:score-resum} and evaluating the two parity-restricted
binomial sums gives $S_k(H^{-1})\leq s_k$ for both odd and even $k$
(Appendix~\ref{app:resummation}).
Inversion is a bijection on the symplectic group, so the bound
$S_k(M)\leq s_k$ holds for every Clifford $\widehat U_{\rm g}$.
It also holds for independent mixtures, since their channel
eigenvalues, and therefore their scores, average linearly.

\subsection{Choi-state certificates beyond Clifford}
\label{sec:finite-unitary}

For a general middle unitary, the bound comes from a convex
relaxation. Every unitary obeys linear constraints on the two-copy
swap statistics of its Choi state, and maximizing the score under
them bounds $S_k$ from above. For $k\leq6$ this linear program is
tight: Appendix~\ref{app:unitary-certificates} lists exact rational
dual certificates whose bounds are
\begin{equation}
 (s_1,s_2,s_3,s_4,s_5,s_6)=(3,17,99,449,2403,11537).
 \label{eq:finite-scores}
\end{equation}
At $k=7$ we use three copies of the Choi state instead. Positivity of
the three-copy moment and its partial transposes gives a semidefinite
program, and a computer-assisted dual certificate bounds it by
$s_7=59139$ (Appendix~\ref{app:seven-qubit}). Together these prove the
bound in Theorem~\ref{thm:finite-unitary}; mixtures are covered by the
same linearity. For $k=2$ the optimum also follows
from the transfer-parameter constraints and shadow-norm formula of
Ref.~\cite{Akhtar2025}.

\section{The dual-axis rotation circuit}
\label{sec:construction}

\subsection{Circuit and exact saturation}

The dual-axis rotation circuit is the Clifford circuit $U_*$ on a
$k$-qubit block defined by
\begin{equation}
 U_Z=e^{i\pi Z^{\otimes k}/4},\qquad
 U_X=e^{i\pi X^{\otimes k}/4},\qquad U_*=U_XU_Z .
 \label{eq:construction}
\end{equation}
\begin{proposition}[Attainment]
\label{prop:attainment}
The circuit $U_*$ attains the score $s_k$ for every $k\geq1$.
\end{proposition}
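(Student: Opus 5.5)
The plan is to track every full-support Pauli string through $U_*$ explicitly and then sum the score in Eq.~\eqref{eq:score} with character (parity-filter) identities. Since $U_*=U_XU_Z$, conjugation acts as $P\mapsto U_X\bigl(U_ZPU_Z^\dagger\bigr)U_X^\dagger$, so $U_Z$ acts first. The basic fact is that $e^{i\pi A/4}$ with $A^2=1$ fixes any Pauli commuting with $A$. It sends any anticommuting Pauli to $\pm iAP$, i.e.\ to $AP$ up to phase, and phases are irrelevant for $\Size$. Label a full-support $P\in\mathcal F_k$ by its numbers $a,b,c$ of $X$, $Y$, $Z$ factors, with $a+b+c=k$. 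Then $P$ anticommutes with $Z^{\otimes k}$ iff $a+b$ is odd, and a string anticommutes with $X^{\otimes k}$ iff its number of $Y$ and $Z$ factors is odd.

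First I trace the propagation case by case. Multiplication by $Z^{\otimes k}$ acts sitewise as $X\leftrightarrow Y$, $Z\to I$, $I\to Z$. Multiplication by $X^{\otimes k}$ acts as $X\to I$, $Y\leftrightarrow Z$, $I\to X$.
\begin{itemize}
\item[(A)] If $a+b$ is even and $b+c$ is odd, $U_Z$ acts trivially and $U_X$ multiplies by $X^{\otimes k}$. The output has size $b+c$, so the contribution is $3^{a}$.
\item[(B)] If $a+b$ and $b+c$ are both even, $P$ is fixed. The output has size $k$, contributing $1$.
\item[(C)] If $a+b$ is odd, $U_Z$ produces a string with $b$ $X$'s, $a$ $Y$'s and $c$ identities. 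Its $Y/Z$ count is $a$.
  \begin{itemize}
  \item[(C1)] If $a$ is odd, $U_X$ acts and gives size $a+c$, contributing $3^{b}$.
  \item[(C2)] If $a$ is even, the string is untouched and has size $a+b$, contributing $3^{c}$.
  \end{itemize}
\end{itemize}
This gives the explicit formula
\[
S_k(U_*)=\sum_{a+b+c=k}\binom{k}{a,b,c}\Bigl(3^a[\text{A}]+[\text{B}]+3^b[\text{C1}]+3^c[\text{C2}]\Bigr).
\]
As a sanity check, at $k=1$ only $X\to I$ survives with weight $3$ (from C1), giving $S_1=3=s_1$.

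Next I evaluate each of the four sums with parity filters of the form $[\text{even}]=(1+(-1)^{\cdot})/2$. For instance, case A becomes
\[
\tfrac14\sum\binom{k}{a,b,c}3^a\bigl(1+(-1)^{a+b}\bigr)\bigl(1-(-1)^{b+c}\bigr).
\]
By the multinomial theorem this equals $\tfrac14\bigl[5^k+(-3)^k-1-(-3)^k\bigr]$ after specializing $(x,y,z)=(3,\pm1,\pm1)$ with the appropriate signs. The other three cases are handled the same way, with evaluations such as $(1,3,1)$, $(1,1,3)$ and their signed variants, and case B is evaluated at $(1,\pm1,\pm1)$. Each case produces terms $5^k$, $(\pm 3)^k$, $(\pm1)^k$, and $3^k$-type terms from cancelling signs, each divided by $4$. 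Adding them should give $\bigl(3\cdot5^k\pm3^k+\text{const}\bigr)/4$. The $(-1)^k$ factors produce the split between odd and even $k$ in Eq.~\eqref{eq:sharp-score}.

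The main obstacle is bookkeeping rather than any conceptual step. I must get the order of the two rotations right, using the convention $U_*PU_*^\dagger$ with $U_Z$ applied first. I must also carry the signs in the sixteen or so multinomial evaluations correctly. To guard against errors, I will cross-check the closed form against the small values $(s_1,s_2,s_3)=(3,17,99)$ quoted in Eq.~\eqref{eq:finite-scores} before asserting the general identity. If a discrepancy appears, I will first recheck case C, where the intermediate string's $Y/Z$ count must be read off after the $X\leftrightarrow Y$ swap.
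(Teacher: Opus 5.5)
Your proposal is correct and is essentially the paper's argument in different bookkeeping: your cases B and A are the paper's sectors $v=(0,0)$ and $v=(0,1)$ with $v=(\sum_i x_i,\sum_i z_i)$, and your C1/C2 split by the parity of the number of $X$'s is the $v=(1,0),(1,1)$ split, matched in a $k$-parity-dependent way (this is how your sequential conjugation absorbs the paper's $\eta a$ term). Your multinomial parity filters are the paper's $\F^2$ character projection, and the four sums do close as you expect: $(5^k-1)/4$, $(3^k+3(-1)^k)/4$, $(5^k+(-1)^k-3^k-(-3)^k)/4$ and $(5^k-1)/4$, totaling $(3\cdot5^k-(-3)^k+4(-1)^k-2)/4=s_k$ for both parities. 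One slip is in your $k=1$ sanity check: no conjugation can send $X$ to $I$; instead each of $X,Y,Z$ maps to a single-site Pauli and contributes $1$ (your own C1 formula gives $3^b=1$ for $X$), which is why the total is $3$.
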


\begin{proof}
For a binary input label $p$, write
$a=\sum_i x_i$, $b=\sum_i z_i$, and $\eta=k\bmod2$.
The conjugation action of the two rotations is
\begin{equation}
 x_i'=x_i+b+\eta a,\qquad z_i'=z_i+a .
 \label{eq:binary-action}
\end{equation}
Thus every site receives the same shift
$c(v)=(b+\eta a,a)$, with $v=(a,b)$.
For $U_Z$ alone, the shift is $(0,a)$ and acts in two of the four
parity sectors. The second rotation makes $v\mapsto c(v)$ invertible,
so the shift is nonzero in all three sectors with $v\ne0$.
When $v=0$, no local label changes, so these inputs keep full size
and contribute
\begin{equation}
 A_k=\frac{3^k+3\cdot(-1)^k}{4}
 \label{eq:zero-sector}
\end{equation}
to the score.

In each nonzero sector, a site with $p_i=c(v)$ becomes an identity,
while the other two nonzero local labels remain nonzero.
The local score thus sums to $3+1+1=5$.
Resolving the two global parity constraints gives a leading score
$5^k/4$ from each of the three nonzero sectors, explaining the total
coefficient $3/4$.
Appendix~\ref{app:saturation} evaluates the exact sector sums,
including their parity-dependent corrections. Together with $A_k$,
they give Eq.~\eqref{eq:sharp-score}.
\end{proof}

Together with the bounds of Sec.~\ref{sec:bound},
Proposition~\ref{prop:attainment} proves Theorems~\ref{thm:main}
and~\ref{thm:finite-unitary}.

\subsection{Comparison with the contractive unitary}

The contractive unitary of Ref.~\cite{Wu2026} is
\begin{equation}
 \widehat U_{\rm ct}=\prod_{i<j}e^{i\pi Z_iZ_j/4}.
 \label{eq:wu-circuit}
\end{equation}
It is equivalent to $U_Z$ up to local Clifford gates and has score
\begin{equation}
 S_k(\widehat U_{\rm ct})
 =\frac{5^k+3^k+(-1)^k-1}{2}.
 \label{eq:wu-score}
\end{equation}
For a full-support input with $N_{XY}$ factors in $\{X,Y\}$,
the output size under $U_Z$ is $N_{XY}$ when $N_{XY}$ is odd,
and $k$ when it is even. Summing these cases gives
Eq.~\eqref{eq:wu-score}. The second rotation in $U_*$ uses all
three nonzero parity sectors to contract support, giving the
larger leading score $(3/4)\cdot 5^k$.

\begin{figure}[!t]
\centering
\includegraphics[width=\columnwidth]{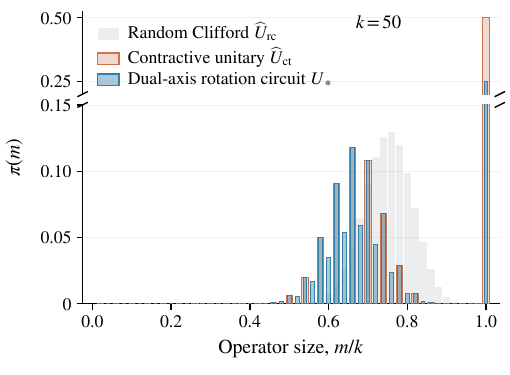}
\caption{Output-size distributions for a full-support input on
$k=50$ qubits. The distributions for random Clifford measurements and
the contractive unitary reproduce the comparison in Ref.~\cite{Wu2026}.
Below full size, $U_*$ matches $\widehat U_{\rm ct}$ at odd $m$ and
also fills even $m$.
The probability of retaining full support is approximately $1/2$ for
$\widehat U_{\rm ct}$ and $1/4$ for $U_*$. The vertical axis is broken
to enlarge the low-probability region.}
\label{fig:size-distribution}
\end{figure}

Figure~\ref{fig:size-distribution} compares the exact output-size
distributions at $k=50$. Both structured circuits have a contracted branch
near $m/k=2/3$, while $U_*$ lowers the probability remaining at
full size from approximately $1/2$ to $1/4$.
Because the readout weights size $m$ by $3^{-m}$, the contracted branch
dominates the weighted sum in Eq.~\eqref{eq:pauli-weight}, and for
$U_*$ its leading coefficient rises from $1/2$ to $3/4$.
Appendix~\ref{app:size-distributions} derives the exact distributions.

The score improvement is zero for $k=1,2$ and positive for all
$k\geq3$ (Appendix~\ref{app:score-comparison}). Even at $k=2$, where
both circuits reach the optimal score $17$, they differ: their
operator Schmidt ranks across the cut are $2$ and $4$, so they are
locally inequivalent (Appendix~\ref{app:implementation}).
Asymptotically,
\begin{equation}
 \frac{C_k(U_*)}{C_k(\widehat U_{\rm ct})}\longrightarrow\frac23.
 \label{eq:full-ratio}
\end{equation}
Figure~\ref{fig:performance} shows the exact finite-size comparison.

\begin{figure}[!htbp]
\centering
\includegraphics[width=\columnwidth]{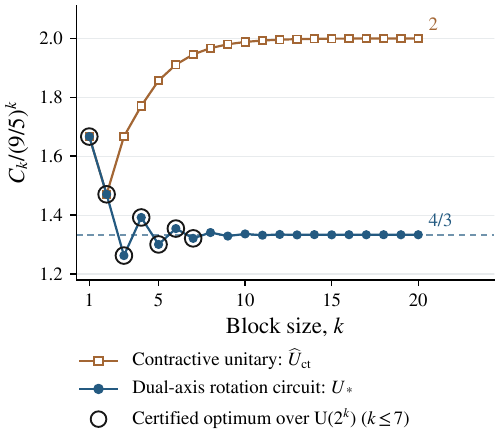}
\caption{Normalized squared shadow norm $C_k/(9/5)^k$ versus block
size $k$ for full-support Pauli observables. The dual-axis
rotation circuit $U_*$ attains the
Clifford optimum of Theorem~\ref{thm:main} for every $k$; dashed lines
show the asymptotic coefficients $2$ and $4/3$.
Black rings mark optimality over $\widehat U_{\rm g}\in\mathrm U(2^k)$
for $1\leq k\leq7$
(Theorem~\ref{thm:finite-unitary}).}
\label{fig:performance}
\end{figure}

With all-to-all connectivity, ancilla-free implementations use
$3k+O(1)$ CNOT gates for the dual-axis rotation circuit and
$2k+O(1)$ for the contractive unitary
(Appendix~\ref{app:implementation}).
The circuit $U_*$ thus trades gates for shots: in the noiseless
setting, its $2/3$ shot ratio in Eq.~\eqref{eq:full-ratio} offsets the
extra CNOTs to leading order, giving the same total CNOT cost at fixed
precision with one third fewer shots.

The channel eigenvalues obey a no-free-lunch constraint: for any
middle unitary, the eigenvalues of the $4^k-1$ nonidentity Paulis sum
to $2^k-1$ (Appendix~\ref{app:partial}). The circuit $U_*$ spends this
budget on strings that fill the block, so shorter strings pay: at
$k=3$, a single-site string has squared shadow norm $27$ under $U_*$
and $81/11$ under $\widehat U_{\rm ct}$.
For a contiguous length-$k$ string measured with uniformly shifted
blocks, as in Ref.~\cite{Wu2026}, $U_*$ still lowers the cost, with
asymptotic ratio $76/99$ (Appendix~\ref{app:sliding}).

\section{Discussion}
\label{sec:discussion}

The optimum has a simple structure. Symplectic support counting fixes
the base $9/5$, and operator-size parity, which links different output
supports, fixes the finite-size terms and, asymptotically, the
prefactor $4/3$.
The dual-axis rotation circuit reaches this bound at every block size
by contracting support in all three nonzero parity sectors.
Because the canonical estimator is minimax optimal for a fixed
ensemble (Corollary~\ref{cor:minimax}), the same optimum holds over
all single-shot estimators unbiased for every state.

Choi-state certificates show that $U_*$ remains optimal over the
entire unitary group $\mathrm U(2^k)$ for $1\leq k\leq7$.
We conjecture that $C_k^{\rm opt}=9^k/s_k$ is the optimum over all
unitaries for every $k$; a closed-form family of multi-copy Choi-state
certificates would be one route to a proof.

Outside this architecture, the base $9/5$ can be beaten:
locally randomized GHZ-basis measurements reach $3/2$ for
full-support strings~\cite{Ippoliti2024,ChenGongYe2024}, but only
when the string fills the GHZ block exactly. Once a single site is
missing, this advantage fails suddenly and totally, leaving the string
impossible to estimate or as costly as under single-qubit Pauli
measurements. The two-sided architecture is informationally complete,
and $U_*$ keeps the base $9/5$ when a fixed number of sites is missing
(Appendix~\ref{app:partial}), so its advantage extends beyond exactly
full support.

Within the two-sided architecture, the exact optimum gives a benchmark
for any Clifford circuit. Turning it into an experimental advantage
raises two open questions: the minimum CNOT count needed to reach it,
with all-to-all connectivity and on a nearest-neighbor chain, and the
best balance between CNOT count and shot count under two-qubit gate
noise.

\section*{Author contributions}

X.N. conceived the key idea and developed the initial proofs.
X.N., T.Z., and Y.W. reviewed the proofs. All authors discussed the
results and contributed to writing the manuscript.

\section*{Use of AI tools}

OpenAI's GPT-6 Astra assisted in reformulating the authors' proofs
with full rigor, checking the arguments, and polishing the manuscript. Anthropic's Claude Opus 5.5 assisted
in polishing the manuscript. The authors reviewed and refined all
resulting material and take full responsibility for the manuscript
and its results.

\appendix

\section{Estimator identities and reconstruction}
\label{app:estimator}

\subsection{Channel eigenvalues and the second moment}

For a fixed sampled Clifford $\widehat U$, write
$\widehat U\widehat O\widehat U^\dagger=\pm P'$.
Computational-basis dephasing keeps this Pauli precisely when
$P'$ consists only of $I$ and $Z$ factors.
Averaging the output local layer makes this event have probability
$3^{-m}$ conditional on size $m$. The input local layer maps a
full-support target uniformly over the $3^k$ phase-free
full-support Paulis. This proves Eqs.~\eqref{eq:pauli-weight}
and \eqref{eq:score}.
The same argument applies to every support set, so all nonidentity
Pauli eigenvalues are positive and the measurement channel is
invertible.

Since the channel is self-adjoint and Pauli diagonal,
\begin{equation}
 o=\frac1{w_k}
 \langle\mathbf z|\widehat U\widehat O
                         \widehat U^\dagger|\mathbf z\rangle .
 \label{eq:scalar-estimator}
\end{equation}
If the propagated Pauli is not diagonal, this quantity is zero.
If it is diagonal, its numerator is $\pm1$ for every outcome.
Its squared numerator is therefore the indicator of a circuit
event independent of the input state and measurement outcome.
Averaging gives $\E_\rho o^{\,2}=w_k/w_k^2=1/w_k$.
Unbiasedness follows from the inverse-channel definition.
Taking the supremum over states gives the squared shadow norm,
proving Proposition~\ref{prop:moment}.
The same proof applies to independent Clifford mixtures and to
the averaged channel of the sliding protocol.

For a mixture, the canonical inverse uses the eigenvalue averaged
over the mixture.

For the explicit circuits in this article, closed-form eigenvalues
give the exact reconstruction normalization.
Once that normalization is known, stabilizer propagation computes
the numerator of Eq.~\eqref{eq:scalar-estimator}
efficiently~\cite{AaronsonGottesman2004}.

\subsection{Minimax single-shot reconstruction}
\label{app:minimax}

The fixed-ensemble reconstruction problem can be formulated using
measurement frames~\cite{DArianoPerinotti2007,Innocenti2023Frames}. Here Pauli symmetry
lets the canonical inverse attain the lower bound on the maximum
variance over input states.

\begin{corollary}[Minimax reconstruction]
\label{cor:minimax}
Fix an ensemble of the form \eqref{eq:ensemble}, allowing an arbitrary
middle unitary or an independent mixture of such unitaries.
For a full-support Pauli $\widehat O$ on $k\geq1$ qubits, let
$\mathcal X_{\widehat O}$ be the real single-shot estimators based on
the sampled circuit and outcome that are unbiased for every state.
Then
\begin{equation}
 \inf_{x\in\mathcal X_{\widehat O}}
      \sup_\rho\operatorname{Var}_\rho(x)=\frac1{w_k},
 \label{eq:minimax-variance}
\end{equation}
and the canonical estimator attains this value.
\end{corollary}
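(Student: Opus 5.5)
The plan splits Eq.~\eqref{eq:minimax-variance} into a lower bound, $\sup_\rho\operatorname{Var}_\rho(x)\ge 1/w_k$ for every $x\in\mathcal X_{\widehat O}$, and attainment by the canonical estimator. The lower bound comes from averaging over a Pauli-twirled family of states, where the variance cannot be small everywhere.

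\textbf{Attainment.} By Proposition~\ref{prop:moment} and its extension to general middle unitaries and mixtures (Appendix~\ref{app:unitary-certificates}), the canonical estimator satisfies $\E_\rho o^2=1/w_k$ for every $\rho$. Hence $\operatorname{Var}_\rho(o)=1/w_k-\Tr(\widehat O\rho)^2\le 1/w_k$, and equality holds at the maximally mixed state.

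\textbf{Lower bound.} Take any $x(U,\mathbf z)$ that is unbiased for every state. Unbiasedness on all $\rho$ is the operator identity
\[
\E_U\sum_{\mathbf z}x(U,\mathbf z)\,\widehat\sigma_U(\mathbf z)=\widehat O .
\]
Evaluate the second moment at $\rho_0=I/2^k$:
\[
\E_{\rho_0}x^2=2^{-k}\,\E_U\sum_{\mathbf z}x(U,\mathbf z)^2 .
\]
Each $\widehat\sigma_U(\mathbf z)$ is a rank-one projector, so $\Tr[\widehat O\,\widehat\sigma_U(\mathbf z)]=\langle\mathbf z|\widehat U\widehat O\widehat U^\dagger|\mathbf z\rangle$. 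Taking the trace of the identity against $\widehat O$ gives
\[
2^k=\Tr(\widehat O^2)=\E_U\sum_{\mathbf z}x(U,\mathbf z)\,\langle\mathbf z|\widehat U\widehat O\widehat U^\dagger|\mathbf z\rangle .
\]
Cauchy--Schwarz over the product measure $\E_U\sum_{\mathbf z}$ then yields
\[
4^k\le \Bigl(\E_U\sum_{\mathbf z}x^2\Bigr)\Bigl(\E_U\sum_{\mathbf z}\langle\mathbf z|\widehat U\widehat O\widehat U^\dagger|\mathbf z\rangle^2\Bigr).
\]
The second factor equals $\Tr[\widehat O\,\mathcal M(\widehat O)]=w_k 2^k$, because the measurement channel is Pauli-diagonal with eigenvalue $w_k$ on $\widehat O$. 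Therefore $\E_U\sum_{\mathbf z}x^2\ge 2^k/w_k$, so $\E_{\rho_0}x^2\ge 1/w_k$. Since $\Tr(\widehat O\rho_0)=0$, this gives $\operatorname{Var}_{\rho_0}(x)\ge 1/w_k$.

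This argument does not need the Clifford structure. For a general middle unitary, the only inputs are that the channel is self-adjoint and Pauli-diagonal with full-support eigenvalue $w_k$; the local layers guarantee this. For independent mixtures the channel eigenvalue is the average, and the same computation applies with $U$ ranging over the mixed ensemble.

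The main point to check is that the equality $\E_U\sum_{\mathbf z}\langle\mathbf z|\widehat U\widehat O\widehat U^\dagger|\mathbf z\rangle^2=2^k w_k$ holds for non-Clifford middle unitaries and mixtures. This follows from the definition of $\mathcal M$ in Eq.~\eqref{eq:channel}, with $\Tr[\widehat O\mathcal M(\widehat O)]=w_k\Tr(\widehat O^2)$. I would verify it directly from the general-unitary definition of $w_k$ rather than from the Clifford size distribution $\pi(m)$.
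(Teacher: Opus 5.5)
Your proposal is correct and follows the paper's proof essentially step for step: it applies Cauchy--Schwarz at $\rho_0=I/2^k$, pairing $x$ with $a=\langle\mathbf z|\widehat U\widehat O\widehat U^\dagger|\mathbf z\rangle$, using $\E_0[xa]=1$ from the unbiasedness identity and $\E_0[a^2]=w_k$ from the channel eigenvalue, and it obtains attainment from the state-independent second moment $1/w_k$. Your opening mention of a Pauli-twirled family of states is unnecessary, since the argument you actually give uses only the single state $\rho_0$.
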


\begin{proof}
Set $d=2^k$, $E_{U,\mathbf z}=\widehat\sigma_U(\mathbf z)$, and
$a(U,\mathbf z)=\Tr(\widehat O E_{U,\mathbf z})$.
Unbiasedness for all states is equivalent to the operator identity
\begin{equation}
 \E_{\widehat U}\sum_{\mathbf z}
       x(U,\mathbf z)E_{U,\mathbf z}=\widehat O.
 \label{eq:unbiased-frame-identity}
\end{equation}
At $\rho_0=I/d$, every outcome has conditional probability $1/d$.
Writing $\E_0$ for expectation under this state, we obtain
\begin{align}
 \E_0[xa]&=\frac{\Tr(\widehat O^2)}d=1,\nonumber\\
 \E_0[a^2]&=
 \frac{\Tr[\widehat O\mathcal M_{\mathcal E_U}(\widehat O)]}{d}
 =w_k.
 \label{eq:minimax-cauchy-data}
\end{align}
Cauchy--Schwarz gives $\E_0[x^2]\geq1/w_k$.
Since $\widehat O$ is traceless, the estimator has zero mean at
$\rho_0$, so its worst-case variance is at least $1/w_k$.
The canonical estimator $x=a/w_k$ has second moment $1/w_k$
for every state, by Proposition~\ref{prop:moment} in the Clifford
case and Eq.~\eqref{eq:general-moment} for arbitrary middle unitaries.
Its variance is $1/w_k-\Tr(\rho\widehat O)^2$, with supremum $1/w_k$
attained at $\rho_0$. This saturates the lower bound.
The bound also covers estimators with extra classical randomness:
averaging over the seed preserves unbiasedness and never raises the
variance for any state.
\end{proof}

Consequently, the exact optima in Theorems~\ref{thm:main} and
\ref{thm:finite-unitary} also apply when reconstruction is optimized
over this class of estimators.

\section{Proof of the paired information-set lemma}
\label{app:paired}

Let $G$ be a $2\ell\times2k$ row-basis matrix of the subspace $L$,
and let $J$ be block diagonal with $k$ blocks
$\left(\begin{smallmatrix}0&1\\1&0\end{smallmatrix}\right)$.
Nondegeneracy of $L$ means
\begin{equation}
 \det(GJG^{\mathsf T})=1
\end{equation}
over $\F$. Let $\sigma$ exchange the two columns within every
physical coordinate pair.
For a set $I$ of $2\ell$ column indices, write $G_I$ for the
corresponding square submatrix.
Cauchy--Binet gives, with signs immaterial in characteristic two,
\begin{equation}
 1=\sum_{|I|=2\ell}\det(G_I)\det(G_{\sigma(I)}).
\end{equation}
If $I\ne\sigma(I)$, its term equals that for $\sigma(I)$, so they
cancel. Each remaining $I$ is a union of $\ell$ complete physical
pairs. At least one has $\det(G_I)=1$.
Projection onto those pairs is therefore an isomorphism on $L$.

On a translate $a+L$, the same projection is an affine bijection.
Every full-support point projects to a nonzero local label on
each selected pair. There are only $3^\ell$ such projected
labels, proving the count in Lemma~\ref{lem:paired}.

\section{Proof of the quadratic-sign bound}
\label{app:quadratic}

We prove Lemma~\ref{lem:quadratic} on $\F^{2\ell}$.
For a quadratic function with a possible constant term, its polar
form is
\[
 B_Q(x,y)=Q(x+y)+Q(x)+Q(y)+Q(0).
\]
Adding a constant to $Q$ changes its phase and the sign $\epsilon$
together, leaving the counted set unchanged.
We may therefore normalize $Q(0)=0$, keeping the argument fixed,
since a translation would move $\mathcal F_\ell$.

\subsection{A product-measure bound}

On each physical pair define the distribution
\begin{equation}
 \mu(0)=\frac12,\qquad \mu(x)=\frac16\quad(x\ne0).
 \label{eq:measure}
\end{equation}
Equivalently,
$\mu=\frac13\delta_0+\frac23\operatorname{Unif}(\F^2)$.
The product $\mu^{\otimes\ell}$ is a mixture of uniform measures
on coordinate subspaces.

For any quadratic function $R$ with $R(0)=0$, its phase average
over a linear subspace is either zero or a signed power
$2^{-r/2}$, where $r$ is the even rank of its restricted polar
form. To see this, if the restricted function is nonzero on the
polar radical, summing over that radical gives zero.
Otherwise it descends to a nondegenerate quadratic function
on a quotient, whose symplectic pairs each contribute a factor
of magnitude $1/2$.
For $r=0$ the average is zero or one; for $r\geq2$ it is at
least $-1/2$.
The empty coordinate subspace has average one and occurs in the
mixture with probability $3^{-\ell}$.
Consequently
\begin{equation}
 \E_{\mu^{\otimes\ell}}(-1)^R
 \geq-\frac12+\frac32\cdot 3^{-\ell}.
 \label{eq:product-bound}
\end{equation}

\subsection{Fourier transform of the quadratic phase}

Use the unnormalized Fourier transform defined by the standard
physical symplectic pairing:
\begin{equation}
 \widehat f(a)=\sum_{x\in\F^{2\ell}}f(x)(-1)^{B(a,x)} .
 \label{eq:fourier}
\end{equation}
Set $f_\ell=\mathbf1_{\mathcal F_\ell}$ and
$g_\ell(a)=3^{\ell-\Size(a)}$.
A one-pair calculation gives
\begin{equation}
 \widehat f_\ell(a)=(-1)^{Q_0(a)}g_\ell(a).
 \label{eq:fourier-full}
\end{equation}
Here $Q_0$ is the size-parity function on $\ell$ pairs.

Let $B_Q$ be the polar form of the normalized $Q$.
It may differ from $B$. Its nondegeneracy gives a unique
linear map $t:\F^{2\ell}\to\F^{2\ell}$ specified by
\begin{equation}
 B_Q(t(a),x)=B(a,x)\quad\text{for all }x.
\end{equation}
Completing the square,
$Q(x)+B(a,x)=Q(x+t(a))+Q(t(a))$.
Thus, with $Q^*(a)=Q(t(a))$,
\begin{equation}
 \widehat{(-1)^Q}(a)
 =\epsilon\,2^\ell\cdot(-1)^{Q^*(a)},\qquad Q^*(0)=0 .
 \label{eq:fourier-quadratic}
\end{equation}

Let $Z=\sum_{x\in\mathcal F_\ell}(-1)^{Q(x)}$.
Parseval's identity gives
\begin{align}
 \epsilon Z
 &=2^{-\ell}\sum_a
       g_\ell(a)(-1)^{Q^*(a)+Q_0(a)}\nonumber\\
 &=3^\ell\,\E_{\mu^{\otimes\ell}}
                       (-1)^{Q^*+Q_0}
 \geq\frac{3-3^\ell}{2}.
 \label{eq:quadratic-parseval}
\end{align}
The second equality follows from $\sum_a g_\ell(a)=6^\ell$
and Eq.~\eqref{eq:measure}.
The function $Q^*+Q_0$ vanishes at zero, so
Eq.~\eqref{eq:product-bound} applies even if its polar form is
degenerate.
The required count is $(3^\ell-\epsilon Z)/2$.
Taking its integer part proves Eq.~\eqref{eq:opposite-count},
including $\ell=0$.

\section{Affine duality and parity resummation}
\label{app:duality}

\subsection{Defect identities and the inverse-map duality}

Since $H$ is symplectic, the function
$Q_0(H^{-1}y)+Q_0(y)$ has zero polar form and is linear.
Nondegeneracy of $B$ gives the unique $\delta_H$ in
Eq.~\eqref{eq:defect}.
For any $d$,
\begin{equation}
 \sum_y(-1)^{Q_0(y)+B(d,y)}
 =(-1)^{Q_0(d)}\cdot(-2)^k .
\end{equation}
The invertibility of $H$ leaves the Gauss sum of
$Q_0\circ H^{-1}$ unchanged. Substituting $d=\delta_H$
therefore forces $Q_0(\delta_H)=0$.
Expanding $Q_0(Hp+\delta_H)$ and
$Q_0(H^{-1}(y+\delta_H))$ then gives
Eqs.~\eqref{eq:parity-forward} and \eqref{eq:parity-inverse}.

The Fourier transform of $g_k$ is
\begin{equation}
 \widehat g_k(a)=2^k g_k(a).
 \label{eq:fourier-g}
\end{equation}
Together with Eq.~\eqref{eq:fourier-full}, it implies for every
symplectic $K$ that
\begin{equation}
 S_k(K)=2^{-k}\sum_a(-1)^{Q_0(a)}g_k(a)g_k(Ka).
 \label{eq:linear-fourier-score}
\end{equation}
Define the auxiliary affine score
\[
 T_{\delta_H}(H)=\sum_{p\in\mathcal F_k}g_k(Hp+\delta_H).
\]
Fourier inversion and
$B(a,Hp)=B(H^{-1}a,p)$ give
\begin{align}
 T_{\delta_H}(H)
 &=2^{-k}\sum_a g_k(a)
   (-1)^{B(\delta_H,a)+Q_0(H^{-1}a)}
                       g_k(H^{-1}a)\nonumber\\
 &=2^{-k}\sum_a(-1)^{Q_0(a)}
                       g_k(a)g_k(H^{-1}a)\nonumber\\
 &=S_k(H^{-1}).
 \label{eq:duality-proof}
\end{align}
The second step uses Eq.~\eqref{eq:defect}, and the last uses
Eq.~\eqref{eq:linear-fourier-score} with $K=H^{-1}$.

\subsection{Completion of the sharp bound}
\label{app:resummation}

Substituting the support-count bounds in Eq.~\eqref{eq:count-caps}
into Eq.~\eqref{eq:score-resum} requires two binomial sums:
\begin{align}
 \sum_{k-|T|\ {\rm odd}}2^{k-|T|}\cdot 3^{|T|}
       &=\frac{5^k-1}{2},\label{eq:binomial-a}\\
 \sum_{k-|T|\ {\rm odd}}2^{k-|T|}
       &=\frac{3^k-(-1)^k}{2}.
 \label{eq:binomial-d}
\end{align}
For odd $k$, substitution gives
\begin{equation}
 S_k(H^{-1})
 \leq3^k+\frac34\cdot(5^k-1)-\frac34\cdot(3^k+1)
 =\frac{3\cdot 5^k+3^k-6}{4}.
\end{equation}
For even $k$, it gives
\begin{equation}
 S_k(H^{-1})
 \leq3^k+\frac34\cdot(5^k-1)-\frac54\cdot(3^k-1)
 =\frac{3\cdot 5^k-3^k+2}{4}.
\end{equation}
Both bounds equal $s_k$ in Eq.~\eqref{eq:sharp-score}.

\section{Exact finite-size certificates over the full unitary group}
\label{app:unitary-certificates}

\subsection{General Pauli transfer and the statistical objective}

Let $\widehat U_{\rm g}$ now be an arbitrary $k$-qubit unitary
and $d=2^k$. For phase-free Paulis
$P,Q\in\mathcal P_k=\{I,X,Y,Z\}^{\otimes k}$, define
\begin{equation}
 \mathcal R_{\rm g}(Q,P)
 =\frac1d\Tr\!\left(
 Q\widehat U_{\rm g}P\widehat U_{\rm g}^{\dagger}\right).
 \label{eq:general-transfer}
\end{equation}
These real coefficients form an orthogonal Pauli transfer matrix.
The two local layers keep the measurement channel Pauli diagonal.
Its full-support eigenvalue is
\begin{equation}
 w_k=\frac1{3^k}\sum_{P\in\mathcal F_k}
                 \sum_{Q\in\mathcal P_k}
       3^{-\Size(Q)}\mathcal R_{\rm g}(Q,P)^2 .
 \label{eq:general-weight}
\end{equation}
Equivalently, the size distribution in Eq.~\eqref{eq:pauli-weight}
is the squared-transfer distribution
\[
 \pi(m)=\frac1{3^k}\sum_{P\in\mathcal F_k}
                   \sum_{\Size(Q)=m}\mathcal R_{\rm g}(Q,P)^2 .
\]
It is nonnegative and normalized. Analogous expressions hold on
every input support. Orthogonality and $3^{-\Size(Q)}\geq3^{-k}$
make all Pauli eigenvalues positive, so the full channel is invertible.
The score extension is
\begin{equation}
 S_k(\widehat U_{\rm g})
 =\sum_{P\in\mathcal F_k}\sum_{Q\in\mathcal P_k}
       3^{k-\Size(Q)}\mathcal R_{\rm g}(Q,P)^2 .
 \label{eq:general-score}
\end{equation}
For a Clifford $\widehat U_{\rm g}$, each input column has a single
nonzero squared entry, recovering Eq.~\eqref{eq:score}.

The second-moment identity also survives this extension, as follows
from Pauli invariance (Proposition~4 of Ref.~\cite{Bu2024}).
Write $E_{U,\mathbf z}=\widehat\sigma_U(\mathbf z)$.
Self-adjointness of the Pauli-diagonal inverse channel gives
$o=w_k^{-1}\Tr(\widehat O E_{U,\mathbf z})$.
Right multiplication $\widehat U\mapsto\widehat U A$ by any
Pauli $A$ leaves the ensemble invariant. It changes
$\Tr(\widehat O E_{U,\mathbf z})$ only by a sign.
Averaging the Born factor over $A$ gives
$\Tr(E_{U,\mathbf z})/d=1/d$ for every input state. Therefore
\begin{align}
 \E_\rho o^{\,2}
 &=\frac1{d w_k^2}\E_{\widehat U}\sum_{\mathbf z}
                  [\Tr(\widehat O E_{U,\mathbf z})]^2\nonumber\\
 &=\frac{\Tr[\widehat O\mathcal M_{\mathcal E_U}
                              (\widehat O)]}{d w_k^2}
 =\frac1{w_k}.
 \label{eq:general-moment}
\end{align}

\begin{table*}[t]
\caption{Exact rational certificates for optimality over the entire unitary
group $\mathrm U(2^k)$ at $1\leq k\leq6$. Left: row and column multipliers. Right: nonnegative
Pauli-positivity multipliers. Entries not listed are zero; in particular,
all $\mu_{ab}$ vanish for $k\leq3$. The sign convention is fixed by
Eq.~\eqref{eq:certificate-residual}.}
\label{tab:certificate-multipliers}
\centering
\begin{minipage}[t]{0.53\textwidth}
\vspace{0pt}
\centering
\begin{tabular}{rrrr}
\toprule
$k$ & $r$ & $y_r$ & $z_r$\\
\midrule
1 & 0 & $1$ & $0$\\
\addlinespace[2pt]
2 & 0 & $0$ & $1$\\
2 & 2 & $8$ & $0$\\
\addlinespace[2pt]
3 & 0 & $0$ & $1$\\
3 & 2 & $8$ & $0$\\
\addlinespace[2pt]
4 & 0 & $0$ & $13/4$\\
4 & 2 & $7/2$ & $-7/4$\\
4 & 3 & $3/2$ & $0$\\
4 & 4 & $71$ & $9/4$\\
\addlinespace[2pt]
5 & 0 & $1831/256$ & $0$\\
5 & 1 & $0$ & $-165/256$\\
5 & 2 & $2519/256$ & $-337/32$\\
5 & 3 & $15/16$ & $915/256$\\
5 & 4 & $15751/256$ & $-241/16$\\
5 & 5 & $15/8$ & $1995/256$\\
\addlinespace[2pt]
6 & 0 & $0$ & $3901/256$\\
6 & 1 & $-285/256$ & $0$\\
6 & 2 & $1017/512$ & $-2855/512$\\
6 & 3 & $-245/512$ & $2963/512$\\
6 & 4 & $5965/256$ & $-199/32$\\
6 & 5 & $315/64$ & $335/256$\\
6 & 6 & $575011/512$ & $2655/512$\\
\bottomrule
\end{tabular}
\end{minipage}\hfill
\begin{minipage}[t]{0.43\textwidth}
\vspace{0pt}
\centering
\begin{tabular}{rrr}
\toprule
$k$ & $(a,b)$ & $\mu_{ab}$\\
\midrule
4 & $(1,1)$ & $3/32$\\
4 & $(1,2)$ & $1/32$\\
\addlinespace[2pt]
5 & $(1,1)$ & $1/8$\\
5 & $(1,2)$ & $5/256$\\
5 & $(1,3)$ & $1/256$\\
5 & $(3,1)$ & $7/256$\\
5 & $(4,1)$ & $5/256$\\
\addlinespace[2pt]
6 & $(1,1)$ & $35/256$\\
6 & $(1,2)$ & $33/1024$\\
6 & $(1,3)$ & $17/4096$\\
6 & $(2,1)$ & $3/1024$\\
6 & $(2,4)$ & $1/4096$\\
6 & $(3,1)$ & $7/4096$\\
6 & $(4,1)$ & $21/512$\\
6 & $(4,2)$ & $31/4096$\\
\bottomrule
\end{tabular}
\end{minipage}
\end{table*}

\subsection{Necessary constraints from the Choi state}

Use the normalized Choi vector
\[
 |u\rangle=(I\otimes\widehat U_{\rm g})
                    d^{-1/2}\sum_{j=0}^{d-1}|j,j\rangle .
\]
On two copies, consider a swap test on each input-qubit pair and
each output-qubit pair.
Let $p_{rs}$ be the probability of $r$ antisymmetric input
outcomes and $s$ antisymmetric output outcomes, and set
$W_{rs}=4^k p_{rs}$.
These probabilities obey
\begin{align}
 W_{rs}&\geq0,\qquad
 W_{rs}=0\quad\text{if }r+s\text{ is odd},\label{eq:choi-parity}\\
 \sum_sW_{rs}&=\nu_r,\qquad \sum_rW_{rs}=\nu_s,\qquad
 \nu_r=\binom{k}{r}\cdot 3^{k-r}.
 \label{eq:choi-marginals}
\end{align}
The parity constraint follows because two copies of a pure state
have positive global swap parity.
Each Choi marginal is maximally mixed, giving the binomial row
and column sums.

Define coefficients $L_a(r)$ by
\begin{equation}
 \sum_{a=0}^kL_a(r)x^a=(1+x)^{k-r}(1-3x)^r.
 \label{eq:choi-polynomials}
\end{equation}
The identity
$\sum_{\sigma=X,Y,Z}\sigma\otimes\sigma=2F-I$,
where $F$ swaps two local copies, gives
\begin{align}
 A_{ab}
 &:=\sum_{\substack{\Size(P)=a\\\Size(Q)=b}}
                          \mathcal R_{\rm g}(Q,P)^2\nonumber\\
 &=4^{-k}\sum_{r,s}L_a(r)L_b(s)W_{rs}\geq0.
 \label{eq:choi-positivity}
\end{align}
This relation applies the correspondence between swap-test statistics
and quantum weight enumerators~\cite{Shi2026Swap,Miller2026Enumerators}
to the Choi state, resolving input and output sizes separately.
The input transpose in the Choi correspondence causes no sign
change here: the sign of $Y$ cancels between two copies.
Since $L_k(r)=(-3)^r$ and
$\sum_b3^{k-b}L_b(s)=4^k\delta_{s0}$, Eq.~\eqref{eq:general-score}
becomes
\begin{equation}
 S_k(\widehat U_{\rm g})=\sum_{r\ {\rm even}}3^rW_{r0}.
 \label{eq:choi-objective}
\end{equation}
Every unitary satisfies
Eqs.~\eqref{eq:choi-parity}--\eqref{eq:choi-positivity}, so
maximizing the score over all $W$ obeying them bounds $S_k$ from
above.

\subsection{Explicit rational upper certificates}

Choose rational $y_r,z_s$ and nonnegative rational $\mu_{ab}$.
For every cell with $r+s$ even, define
\begin{equation}
 h_{rs}=3^r\delta_{s0}-y_r-z_s+
                \sum_{a,b}\mu_{ab}L_a(r)L_b(s).
 \label{eq:certificate-residual}
\end{equation}
If every $h_{rs}\leq0$, multiplication by $W_{rs}\geq0$
and summation yield
\begin{align}
 S_k(\widehat U_{\rm g})
 &\leq\sum_r(y_r+z_r)\nu_r
              -4^k\sum_{a,b}\mu_{ab}A_{ab}\nonumber\\
 &\leq\sum_r(y_r+z_r)\nu_r.
 \label{eq:certificate-bound}
\end{align}
Table~\ref{tab:certificate-multipliers} lists all nonzero
multipliers for $k=1,\ldots,6$; omitted entries are zero.
Exact rational evaluation gives
$\max_{r+s\ {\rm even}}h_{rs}=0$ in every case, and the last
sum in Eq.~\eqref{eq:certificate-bound} equals $s_k$.
Proposition~\ref{prop:attainment} supplies $S_k(U_*)=s_k$,
proving Theorem~\ref{thm:finite-unitary} for $k\leq6$.

\subsection{Three-copy certificate at seven qubits}
\label{app:seven-qubit}

At $k=7$ the two-copy linear program has maximum $s_7+432/31$, so we
pass to three copies of the Choi state. After averaging over the
local Clifford layers, the third moment
$\E\,(|u\rangle\langle u|)^{\otimes3}$ reduces to finitely many real
coordinates $x$, and the score is linear in them,
$S_7=-5^7\,q^{\mathsf T}x$. Every unitary yields an $x$ obeying three
families of constraints: $715$ linear equations $Ex=e$ from the
maximally mixed Choi marginals; scalar inequalities $Ix\leq l$, of
which the certificate uses $31$ nonnegative squared-transfer sums and
$20$ lower bounds on reduced purities; and positive semidefiniteness
of the moment blocks $F_s(x)$ and of their partial transposes
$F_p(x)$ (transposing one of the three Choi copies). A dual point
$(y,\mu\geq0,Z_s\succeq0,Z_p\succeq0)$ satisfying
\begin{equation}
 q+E^{\mathsf T}y+I^{\mathsf T}\mu
 -\sum_sF_s^*(Z_s)-\sum_pF_p^*(Z_p)=0
 \label{eq:three-copy-dual}
\end{equation}
then gives $S_7(\widehat U_{\rm g})\leq5^7(e^{\mathsf T}y+l^{\mathsf T}\mu)$
for every unitary.

The dual is defined by exact rational linear systems, and
complementarity with the exact rational moment of $U_*$ fixes its
value at $s_7=59139$. Ball arithmetic at $192$ bits encloses the
solution and proves positivity of all $340$ moment blocks and $384$
partial-transpose blocks: every Schur-complement pivot has a strictly
positive certified lower bound.
A numerical semidefinite solver was used only to find the rational
starting point, and the verification uses no optimizer.
Together with Proposition~\ref{prop:attainment}, this proves
Theorem~\ref{thm:finite-unitary} at $k=7$.

\section{Circuit comparison, size distributions, and implementation}
\label{app:resources}

\subsection{Exact saturation by parity sectors}
\label{app:saturation}

We evaluate the sector contributions to the score of $U_*$ in
Proposition~\ref{prop:attainment}. Write
$v=(a,b)=\sum_i p_i$ and $c(v)=(b+\eta a,a)$, with $\eta=k\bmod2$,
as in Eq.~\eqref{eq:binary-action}.
The zero sector contributes $A_k$ in Eq.~\eqref{eq:zero-sector}.

For $v\ne0$, let $c=c(v)$.
At one site the weighted character sum is
\begin{equation}
 K_c(t)=\sum_{p_i\ne0}g_1(p_i+c)(-1)^{B(t,p_i)}
 =\begin{cases}
 5,&t=0,\\
 1,&t=c,\\
 -3,&t\notin\{0,c\}.
 \end{cases}
 \label{eq:local-character}
\end{equation}
Projecting onto the input constraint $\sum_i p_i=v$ gives
\begin{align}
 \frac14\sum_{t\in\F^2}(-1)^{B(t,v)}K_c(t)^k
 &=\frac{5^k+s(v)-(-3)^k[1+s(v)]}{4},
 \label{eq:sector-score}\\
 s(v)&=(-1)^{B(c(v),v)}.\nonumber
\end{align}
For odd $k$, the three nonzero $v$ all have $s(v)=-1$.
For even $k$, two have $s(v)=-1$ and one has $s(v)=+1$.
Adding these sectors to Eq.~\eqref{eq:zero-sector} gives precisely
Eq.~\eqref{eq:sharp-score}.

\subsection{Comparison with the contractive unitary}
\label{app:score-comparison}

With $a=\sum_i x_i$ and $\eta=k\bmod2$, the binary action
of $\widehat U_{\rm ct}$ is
\begin{equation}
 x_i'=x_i,\qquad z_i'=z_i+a+\eta x_i .
\end{equation}
For $U_Z$ the last term $\eta x_i$ is absent.
That difference is a product of local phase gates.
Any residual difference in the Clifford action's signs is also
a Pauli correction, hence local.
Local Clifford factors on either side of the middle unitary
are absorbed by the uniform layers of Eq.~\eqref{eq:ensemble}.
This proves the equivalence of their channel spectra.

Subtracting Eq.~\eqref{eq:wu-score} from Eq.~\eqref{eq:sharp-score}
gives the score improvement
\begin{equation}
 s_k-S_k(\widehat U_{\rm ct})=
 \begin{cases}
 (5^k-3^k-2)/4,&k\ \text{odd},\\
 (5^k-3\cdot 3^k+2)/4,&k\ \text{even}.
 \end{cases}
 \label{eq:gain}
\end{equation}
It is zero for $k=1,2$ and positive for all $k\geq3$.

\subsection{Full-support size distributions}
\label{app:size-distributions}

The output-size distribution can be encoded by
\begin{equation}
 W_{\widehat U_{\rm g}}(t)=
 \sum_{P\in\mathcal F_k}
 t^{\Size(\widehat U_{\rm g}P\widehat U_{\rm g}^\dagger)},\qquad
 S_k=3^k W_{\widehat U_{\rm g}}(1/3).
 \label{eq:size-enumerator}
\end{equation}
Pauli weight enumerators also organize quantum-code data
\cite{ShorLaflamme1997,Rains1998Weight}; the family here is the
full-support input set, and the channel fixes the evaluation point.

The enumerator in Eq.~\eqref{eq:size-enumerator} determines the
probabilities by $\pi(m)=3^{-k}[t^m]W_{\widehat U_{\rm g}}(t)$.
For the contractive unitary, separating odd and even $N_{XY}$ gives
\begin{align}
 W_{\widehat U_{\rm ct}}(t)
 &=\frac{3^k+(-1)^k}{2}\,t^k\nonumber\\
 &\quad+\frac{(1+2t)^k-(1-2t)^k}{2}.
 \label{eq:ct-size-enumerator}
\end{align}
For the dual-axis rotation circuit,
\begin{equation}
 W_{U_*}(t)=A_k t^k+D_k(t),
 \label{eq:star-size-enumerator}
\end{equation}
where $A_k$ is given in Eq.~\eqref{eq:zero-sector}, and
\begin{equation}
 D_k(t)=
 \begin{cases}
 \dfrac{3\cdot[(1+2t)^k-(1-2t)^k]}4,&k\ \text{odd},\\[5pt]
 \dfrac{3\cdot(1+2t)^k-(1-2t)^k-2}4,&k\ \text{even}.
 \end{cases}
 \label{eq:contracted-size-enumerator}
\end{equation}
To obtain this expression, the zero-parity sector contributes
$A_k t^k$. In each nonzero sector, the local size-generating
character takes the values $1+2t$, $1-2t$, and $-1$ at Fourier
labels $0$, $c$, and the remaining two labels, respectively.
The character projection used in Eq.~\eqref{eq:sector-score}, with
the same signs $s(v)$, then gives Eq.~\eqref{eq:contracted-size-enumerator}.

For comparison, a uniformly random $k$-qubit Clifford maps a
nonidentity input uniformly over the $4^k-1$ nonidentity Paulis.
Its output-size probabilities are therefore
\begin{equation}
 \pi_{\rm rc}(m)=\frac{\binom{k}{m}\cdot 3^m}{4^k-1},
 \qquad 1\leq m\leq k.
 \label{eq:random-size-distribution}
\end{equation}

\subsection{Implementation and local inequivalence}
\label{app:implementation}

Each Pauli rotation can be synthesized by accumulating parity,
applying a one-qubit phase rotation, and uncomputing the parity.
The usual construction uses $2(k-1)$ CNOT gates for a $k$-site
Pauli string~\cite{Amy2019CNOTPhase,CowtanEtAl2020}.
For the angle in Eq.~\eqref{eq:construction}, this count reduces
to $2k-3$ when $k\geq2$.
Choose an edge $(a,b)$ of a spanning tree and accumulate the parity
of each of the two components onto its endpoint, using $k-2$ CNOTs.
The remaining two-site rotation satisfies
\begin{equation}
 e^{i\pi Z_aZ_b/4}=e^{i\pi/4}S_a^\dagger S_b^\dagger\,\mathrm{CZ}_{ab},
 \qquad S=\operatorname{diag}(1,i).
 \label{eq:two-site-rotation}
\end{equation}
Since CZ is locally equivalent to one CNOT, applying this rotation
and reversing the parity accumulation uses $2(k-2)+1=2k-3$ CNOTs.
Hadamard basis changes give the same count for $U_X$.
No ancillas are needed. The same bound applies to the contractive
unitary, which is locally Clifford equivalent to $U_Z$.

The two rotations in the dual-axis rotation circuit can be compiled
jointly~\cite{CowtanEtAl2020}. Let $m=\lfloor k/2\rfloor$ and define
\begin{equation}
 W=\prod_{j=1}^{m}\mathrm{CNOT}_{2j-1\to2j}=W^\dagger.
 \label{eq:pairing-circuit}
\end{equation}
Let $A$ contain the odd-indexed sites and $B$ the even-indexed sites,
with the unpaired site $k$ also included in $B$ when $k$ is odd.
Thus $|A|=|B|=r:=\lceil k/2\rceil$.
Writing $X_A=\prod_{i\in A}X_i$ and $Z_B=\prod_{i\in B}Z_i$, we have
\begin{equation}
 WX^{\otimes k}W^\dagger=X_A,\qquad
 WZ^{\otimes k}W^\dagger=Z_B.
 \label{eq:paired-supports}
\end{equation}
Consequently,
\begin{equation}
 U_*=W e^{i\pi X_A/4}e^{i\pi Z_B/4}W^\dagger.
 \label{eq:joint-compilation}
\end{equation}
The pairing and its inverse use $2m$ CNOT gates.
Each remaining rotation acts on $r$ sites and uses $2r-3$ CNOTs
for $r\geq2$, or none for $r=1$.
The total is therefore $2m+2(2r-3)$ for $k\geq3$, giving
the implementation bounds
\begin{equation}
 N_{\mathrm{CNOT}}(U_*)\leq
 \begin{cases}
 0,&k=1,\\
 2,&k=2,\\
 3k-6,&k\geq4\ \text{even},\\
 3k-5,&k\geq3\ \text{odd}.
 \end{cases}
 \label{eq:joint-cnot-count}
\end{equation}
These counts assume access to the pairing edges and connected
parity-accumulation networks within $A$ and $B$, as available with
all-to-all connectivity.
On a nearest-neighbor chain, compiling the two rotations separately
uses $4k-6$ CNOTs for $k\geq2$.

For any nonempty bipartition of the block at $k\geq2$,
\begin{equation}
 U_Z=\frac{I+iZ^{\otimes k}}{\sqrt2}
\end{equation}
has operator Schmidt rank two.
The expansion
\begin{equation}
 U_*=\frac12\cdot\left[
 I+iZ^{\otimes k}+iX^{\otimes k}
             -X^{\otimes k}Z^{\otimes k}\right]
\end{equation}
has rank four: on each side of the cut the four factors are
distinct Pauli strings and therefore orthogonal.
Multiplication by local unitaries on the input and output
preserves this rank. Thus $U_*$ and $U_Z$ are locally
inequivalent, including $k=2$ where both are optimal.

\section{Exact partial-support channel eigenvalues}
\label{app:partial}

Let $\tilde k$ be the number of active sites in a $k$-qubit block,
denote the corresponding Pauli by $\widehat O_{\tilde k,k}$,
and put $q=k-\tilde k$.
For a fixed block size $k$, the channel eigenvalue depends on the
target support only through its cardinality $\tilde k$, because
$U_*$ is permutation invariant.
Uniform input randomization averages the $3^{\tilde k}$
phase-free Paulis on this support.
For either permutation-invariant construction,
\begin{equation}
 w_{\tilde k,k}
 =\frac{S_{\tilde k,k}}{3^{k+\tilde k}},\qquad
 S_{\tilde k,k}
 =\sum_{\substack{p_i\ne0\ {\rm on\ the\ support}\\
                  p_i=0\ {\rm elsewhere}}}
                           3^{k-\Size(Mp)} .
 \label{eq:partial-definition}
\end{equation}

Define
\begin{equation}
 A_{\tilde k}=\frac{3^{\tilde k}+3\cdot(-1)^{\tilde k}}4 .
\end{equation}
For $U_*$, the exact score is
\begin{align}
 S_{\tilde k,k}^{*}
 &=3^{k-\tilde k}A_{\tilde k}+D_{\tilde k,k},\label{eq:partial-star}\\
 D_{\tilde k,k}
 &=
 \begin{cases}
 3\cdot(5^{\tilde k}-1)/4,& k\ \text{odd},\\
 [3\cdot 5^{\tilde k}-2\cdot(-3)^{\tilde k}-1]/4,& k\ \text{even}.
 \end{cases}
 \label{eq:partial-d}
\end{align}
To derive the formula, use the parity sectors of
Eq.~\eqref{eq:binary-action}, summing only over active input
sites. The zero-shift sector contributes $A_{\tilde k}$ choices
and leaves all $q$ inactive sites as identities, supplying the
factor $3^q$.
Every nonzero-shift sector turns all inactive sites into
nonidentity labels; their weight factor is one.
Equation~\eqref{eq:sector-score}, with the number of active
sites in its powers replaced by $\tilde k$ and the parity map
$c(v)$ still determined by the block size $k$, then gives
$D_{\tilde k,k}$.

For the contractive unitary,
\begin{equation}
 S_{\tilde k,k}^{\rm ct}
 =\frac{3^{k-\tilde k}\cdot[3^{\tilde k}+(-1)^{\tilde k}]
               +5^{\tilde k}-1}{2}.
 \label{eq:partial-wu-score}
\end{equation}
If $N_{XY}$ among the active letters is even, the output size
is $\tilde k$. If it is odd, the active $Z$ letters are erased
and the $q$ inactive sites become nonidentity, giving size
$q+N_{XY}$. Summing these two cases proves the formula.
Dividing by $3^{k+\tilde k}$ gives the form used in
Ref.~\cite{Wu2026},
\begin{align}
 w_{\tilde k,k}^{\rm ct}
 &=\frac12\cdot\left[3^{-\tilde k}
                      +(-1)^{\tilde k}\cdot 9^{-\tilde k}\right]\nonumber\\
 &\quad+\frac12\cdot\left[(5/9)^{\tilde k}
                             -9^{-\tilde k}\right]\cdot 3^{-(k-\tilde k)}.
 \label{eq:partial-wu}
\end{align}
At $\tilde k=0$, both scores are $3^k$ and both eigenvalues
are one. At $\tilde k=k$, they reduce to the full-support
scores in Eqs.~\eqref{eq:sharp-score} and \eqref{eq:wu-score}.
For a fixed number $q$ of missing sites, the term $D_{\tilde k,k}$
dominates Eq.~\eqref{eq:partial-star} at large $k$, giving
\begin{equation}
 \|\widehat O_{k-q,k}\|_{\mathcal E_*}^2
 \sim\frac43\cdot 3^q\cdot(9/5)^{k-q}.
 \label{eq:partial-asymptotic}
\end{equation}
The base $9/5$ is unchanged: relative to full support at the same
block size, each missing site raises the cost by a factor of only
$5/3$.

The two circuits redistribute a fixed total. Each rank-one term
$A\mapsto\Tr[A\widehat\sigma_U(\mathbf z)]\,\widehat\sigma_U(\mathbf z)$
of Eq.~\eqref{eq:channel} has trace one, so the measurement channel
has trace $2^k$ for every middle unitary. The eigenvalues of the
$4^k-1$ nonidentity Paulis on a block therefore always sum to $2^k-1$,
and a larger full-support eigenvalue is paid for on other supports.
Relative to the contractive unitary, $U_*$ pays on short strings;
for example, at $k=3$ and $\tilde k=1$,
\begin{equation}
 \|\widehat O_{1,3}\|_{\mathcal E_*}^2=27,\qquad
 \|\widehat O_{1,3}\|_{\mathcal E_{\rm ct}}^2=\frac{81}{11}.
 \label{eq:partial-counterexample}
\end{equation}

\section{Shifted measurement blocks and uniform asymptotics}
\label{app:sliding}

\subsection{Protocol and statistical cost}

We apply the shifted-block framework of Ref.~\cite{Wu2026},
replacing the block unitary $\widehat U_{\rm ct}$ by $U_*$.
Consider an infinite chain, or a periodic chain of $N=n_0k$ sites
with $n_0\geq2$, and a target with $k$ consecutive nonidentity
factors. The measurement now acts on the whole chain: on each shot,
select uniformly one of the $k$ offsets of a partition into $k$-site
blocks, and measure every block independently with
$\widehat R_2\widehat U_{\rm g}\widehat R_1$
(Fig.~\ref{fig:sliding-blocks}).
One offset aligns the target with a block; the others split it
into $k_1$ and $k-k_1$ active sites in two distinct blocks.
For either circuit, the offset-averaged eigenvalue is
\begin{equation}
 w_{\rm slide}=\frac1k\left[
 w_{k,k}+\sum_{k_1=1}^{k-1}
                    w_{k_1,k}w_{k-k_1,k}\right].
 \label{eq:sliding-eigenvalue}
\end{equation}
The tensor-product channel factorizes across blocks even for
entangled input states. Canonical reconstruction inverts this
averaged eigenvalue, so $C_{\rm slide}=1/w_{\rm slide}$.

\begin{proposition}[Sliding-block cost]
\label{prop:sliding}
For the geometry and estimator just specified,
\begin{align}
 C_{\rm slide}(U_*)&\sim\frac{128}{99}\,k(9/5)^k,\label{eq:sliding-new}\\
 C_{\rm slide}(\widehat U_{\rm ct})&\sim
                   \frac{32}{19}\,k(9/5)^k.\label{eq:sliding-old}
\end{align}
\end{proposition}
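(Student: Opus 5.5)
The plan is to substitute the exact partial-support scores into Eq.~\eqref{eq:sliding-eigenvalue} and extract the leading behavior of $w_{\rm slide}$, which should be $(5/9)^k$ times a constant, divided by $k$; then $C_{\rm slide}=1/w_{\rm slide}$. First I would convert Eqs.~\eqref{eq:partial-star}--\eqref{eq:partial-d} into eigenvalues using Eq.~\eqref{eq:partial-definition}. For $U_*$ this gives
\begin{equation*}
 w^{*}_{\tilde k,k}=\frac14\,3^{-\tilde k}+\frac34\,(-1)^{\tilde k}9^{-\tilde k}
 +\frac34\Bigl(\frac53\Bigr)^{\tilde k}3^{-k}+\epsilon_{\tilde k,k}.
\end{equation*}
The remainder satisfies $|\epsilon_{\tilde k,k}|\lesssim 3^{-k}\bigl(3^{-\tilde k}+1\bigr)$; it comes from the constant $-1$ and the even-$k$ term $-2(-3)^{\tilde k}$ in $D_{\tilde k,k}$, since $(-3)^{\tilde k}/3^{k+\tilde k}=(-1)^{\tilde k}3^{-k}$. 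The contractive unitary is handled the same way from Eq.~\eqref{eq:partial-wu}. It has the same structure with coefficients $\tfrac12,\tfrac12,\tfrac12$ in place of $\tfrac14,\tfrac34,\tfrac34$, so the two cases can be treated in parallel with generic coefficients $(\alpha,\beta,\gamma)$.

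Next I would expand each product $w_{k_1,k}w_{k-k_1,k}$ and classify its nine main terms by order. The full-support term is $w_{k,k}=s_k/9^k\sim\gamma(5/9)^k$. The products that survive at order $(5/9)^k$ are the cross terms pairing a short-block factor with a long-block factor; I will call these the dominant cross terms. In the first, $\alpha 3^{-k_1}$ multiplies $\gamma(5/3)^{k-k_1}3^{-k}$, giving $\alpha\gamma(5/9)^k5^{-k_1}$. In the second, $\beta(-1)^{k_1}9^{-k_1}$ multiplies the same long-block factor, giving $\beta\gamma(5/9)^k(-1/15)^{k_1}$. Each comes with its mirror image under $k_1\leftrightarrow k-k_1$.

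All remaining products are $o((5/9)^k)$ even after summing over $k_1$, and they can be discarded:
\begin{itemize}
\item $3^{-k_1}\cdot3^{-(k-k_1)}$ summed over $k_1$ gives $O(k3^{-k})$.
\item The $D\cdot D$ product gives $(5/3)^k9^{-k}=(5/27)^k$.
\item The $\epsilon$ contributions are at most $O(k3^{-k})$.
\end{itemize}
Since $1/3<5/9$, each of these is indeed negligible.

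The dominant cross terms are geometric in $k_1$, decaying uniformly in $k$. I would therefore apply dominated convergence, or simply bound the tails explicitly, to replace the finite sums by $\sum_{k_1\ge1}5^{-k_1}=1/4$ and $\sum_{k_1\ge1}(-1/15)^{k_1}=-1/16$. Each appears twice by the mirror symmetry. This gives
\begin{equation*}
 k\,w_{\rm slide}\,(9/5)^k\longrightarrow \gamma+\frac{\alpha\gamma}{2}-\frac{\beta\gamma}{8}.
\end{equation*}
For $U_*$ this limit is $\tfrac34+\tfrac3{32}-\tfrac9{128}=\tfrac{99}{128}$. For $\widehat U_{\rm ct}$ it is $\tfrac12+\tfrac18-\tfrac1{32}=\tfrac{19}{32}$. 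Inverting gives Eqs.~\eqref{eq:sliding-new} and \eqref{eq:sliding-old}, and their ratio gives the $76/99$ quoted in the main text.

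The main obstacle is bookkeeping rather than any conceptual difficulty. The alternating term $(-1)^{\tilde k}9^{-\tilde k}$ coming from $A_{\tilde k}$ is easy to overlook, yet it contributes at leading order through the factor $-1/16$. I would also need to verify that the parity-dependent corrections in $D_{\tilde k,k}$ for even $k$ are genuinely subleading uniformly in $k_1$. For that I would write the error explicitly as $O(k3^{-k})$ and compare it with $(5/9)^k$, rather than argue term by term.
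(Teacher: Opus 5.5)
Your proposal is correct and follows essentially the same route as the paper. You split $w_{j,k}=a_j+d_kr^j+e_{k,j}$, discard the $a$--$a$, $d$--$d$ and remainder products with bounds uniform in the offset, and sum the geometric cross-term series; your per-copy value $\alpha/4-\beta/16$ equals the paper's $\sum_j a_jr^{-j}$, namely $1/64$ and $3/32$. One cosmetic slip: the $d$--$d$ products summed over offsets give $O(k(5/27)^k)$ rather than $(5/27)^k$, which is still negligible against $(5/9)^k$.
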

The proof below gives error bounds uniform in the offset.
The asymptotic cost ratio is $76/99$,
different from the full-block ratio $2/3$ because split targets
probe the partial-support spectrum.
Figure~\ref{fig:sliding-performance} shows the exact finite-size comparison.

\begin{figure*}[t]
\centering
\includegraphics[width=0.94\textwidth]{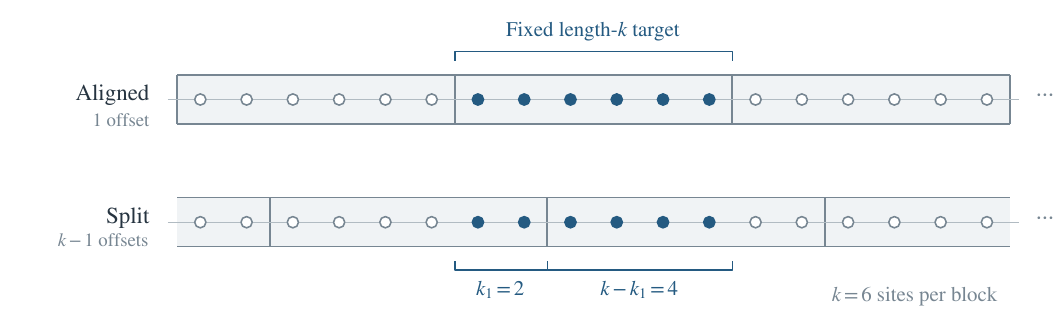}
\caption{Uniformly shifted blocks, illustrated at $k=6$.
The target occupies the same six filled sites in both rows.
Each measurement block contains $k$ sites; open sites carry identity
factors of the target observable. One offset aligns the target
with a block. The other $k-1$ offsets split it between neighboring
blocks; the example has $k_1=2$ and $k-k_1=4$.}
\label{fig:sliding-blocks}
\end{figure*}

\begin{figure}[t]
\centering
\includegraphics[width=\columnwidth]{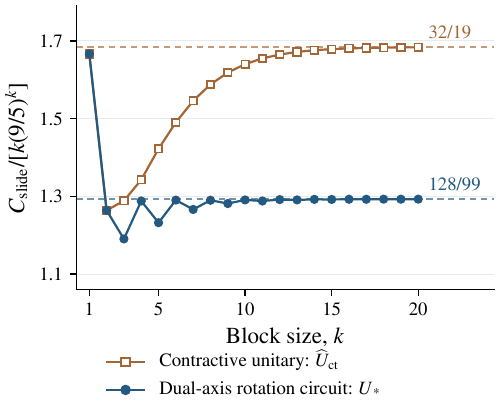}
\caption{Normalized squared shadow norm $C_{\rm slide}/[k(9/5)^k]$
versus block size $k$ for a length-$k$ contiguous target measured with
uniformly shifted $k$-site blocks. The curves compare the contractive unitary and the
dual-axis rotation circuit,
with asymptotic coefficients $32/19$ and $128/99$.}
\label{fig:sliding-performance}
\end{figure}

\subsection{Uniform asymptotics}

For active size $0\leq j\leq k$, write
\begin{align}
 w_{j,k}^{*}&=a_j+d_k r^j+e_{k,j},\label{eq:sliding-decomp}\\
 a_j&=\frac{3^{-j}+3\cdot(-1/9)^j}{4},\quad
 d_k=\frac34\cdot 3^{-k},\quad r=\frac53,\nonumber\\
 e_{k,j}
 &=\begin{cases}
 -\frac34\cdot 3^{-k-j},&k\ \text{odd},\\
 -\frac12\cdot(-1)^j\cdot 3^{-k}-\frac14\cdot 3^{-k-j},&k\ \text{even}.
 \end{cases}
 \label{eq:sliding-remainder}
\end{align}
These identities follow exactly from
Eqs.~\eqref{eq:partial-star}--\eqref{eq:partial-d}.
Uniformly in $j$, $|a_j|\leq3^{-j}$ and
$|e_{k,j}|\leq3^{-k}$.
Define $D_k=d_k r^k=(3/4)\cdot(5/9)^k$.
The aligned contribution is
$w_{k,k}^{*}=D_k+O(3^{-k})$.

In the convolution of Eq.~\eqref{eq:sliding-eigenvalue},
the two $a$--$d$ terms give
\begin{equation}
 2D_k\sum_{j=1}^{k-1}a_jr^{-j}.
\end{equation}
The other terms have the following uniform bounds:
\begin{align}
 \sum_{j=1}^{k-1}|a_j a_{k-j}|&=O(k3^{-k}),\nonumber\\
 \sum_{j=1}^{k-1}d_k^2r^k&=O\bigl(k(5/27)^k\bigr),\nonumber\\
 \sum_{j=1}^{k-1}|a_j e_{k,k-j}|&=O(3^{-k}),\label{eq:uniform-errors}\\
 \sum_{j=1}^{k-1}d_k r^j|e_{k,k-j}|
                                      &=O\bigl((5/27)^k\bigr),\nonumber\\
 \sum_{j=1}^{k-1}|e_{k,j}e_{k,k-j}|&=O(k9^{-k}).\nonumber
\end{align}
The symmetric mixed terms satisfy the same bounds.
Each is $o(D_k)$, despite the increasing number of offsets.
Moreover,
\begin{equation}
 \sum_{j=1}^{\infty}a_jr^{-j}
 =\frac14\sum_{j=1}^{\infty}[5^{-j}+3\cdot(-1/15)^j]
 =\frac1{64}.
\end{equation}
It follows that
\begin{equation}
 k w_{\rm slide}^{*}
 =D_k\left(1+\frac2{64}+o(1)\right)
 =\frac{99}{128}\cdot(5/9)^k[1+o(1)].
 \label{eq:sliding-limit}
\end{equation}

For the contractive unitary the exact decomposition is
\begin{equation}
 w_{j,k}^{\rm ct}
 =\frac{3^{-j}+(-1/9)^j}{2}
       +\frac12\cdot 3^{-k}r^j-\frac12\cdot 3^{-k-j}.
\end{equation}
The same uniform bounds apply.
The aligned term has leading coefficient $1/2$, and the
cross-term series is
\begin{equation}
 \frac12\sum_{j=1}^{\infty}[5^{-j}+(-1/15)^j]=\frac3{32}.
\end{equation}
Hence $k w_{\rm slide}^{\rm ct}\sim(19/32)\cdot(5/9)^k$.
Taking the inverse proves Proposition~\ref{prop:sliding}.

\end{document}